\newtheorem{theorem}{Theorem}
\newtheorem{problem}{Problem}
\theoremstyle{definition}
\newtheorem{defn}{Definition}
\theoremstyle{remark}
\newtheorem{remark}{Remark}
\newtheorem{lemma}{Lemma}
\theoremstyle{plain}
 \newtheorem{assumption}{Assumption}
\begin{document}

\title{3-D Distributed Localization with Mixed Local Relative Measurements}

\author{Xu Fang, Xiaolei Li, Lihua Xie,~\IEEEmembership{Fellow,~IEEE}
\thanks{This work is partially supported by National Research Foundation (NRF) Singapore, ST Engineering-NTU Corporate Lab under its NRF Corporate Lab @ University Scheme and National Natural Science Foundation of China (NSFC) under Grant 61720106011 and 61903319. (Corresponding author: Lihua Xie.)}
\thanks{The authors are with the School of Electrical and Electronic Engineering, Nanyang Technological University, Singapore. (E-mail: fa0001xu@e.ntu.edu.sg; xiaolei@ntu.edu.sg;
elhxie@ntu.edu.sg).}
}

\maketitle

\begin{abstract}
This paper studies 3-D distributed network localization using mixed types of local relative measurements. 
Each node holds a local coordinate frame without a common orientation and can only measure one type of information (relative position, distance, relative bearing, angle, or ratio-of-distance measurements) about its neighboring nodes in its local coordinate frame. A novel rigidity-theory-based distributed localization is developed to overcome the challenge due to the absence of a global coordinate frame. The main idea is to construct displacement constraints for the positions of the nodes by using mixed local relative measurements. Then, a linear distributed localization algorithm is proposed for each free node to estimate its position by solving the displacement constraints. The algebraic condition and
graph condition are obtained 
to guarantee the global convergence of the proposed distributed localization algorithm.
\end{abstract}

\begin{IEEEkeywords}
Distributed localization, sensor network, mixed measurements, network localizability, 3-D space
\end{IEEEkeywords}

\IEEEpeerreviewmaketitle

\section{Introduction}

\IEEEPARstart{N}{etwork} localization is a fundamental problem in multiagent-related applications such as formation control, cooperative pursuit, target tracking, etc \cite{jiang2019group, li2020adaptive,wei2019vision,fan2019zoning}. There are typically
two kinds of nodes in a network: anchor nodes and free nodes. The research on network localization focuses on how to localize the unknown free nodes by the known anchor nodes and relative measurements.

In large-scale networks, it may not be practical to equip each sensor with a GPS due to cost and in some environments GPS is not available \cite{chen2016smartphone}. In addition, there is usually no central unit that can obtain all the measurements in the network and compute all the positions of the free nodes in a centralized way. Hence, it is more practical to design a distributed localization algorithm such that each free node can estimate its own position by only communicating with its neighboring nodes. The existing distributed
algorithms can be divided into
four classes:  distance-based \cite{ diao2014barycentric, han2017barycentric}, relative-position-based \cite{barooah2007estimation}, angle-based \cite{jing2019angle1}, and bearing-based \cite{zhao2016localizability, piovan2013frame, shames2012analysis}. 

The distance-based network localization has been studied extensively so far. A distance-based network is localizable if and only if it is globally rigid. Based on a generalized barycentric coordinate representation,  distance-based distributed protocols are proposed to achieve network localization in both two-dimensional and three-dimensional spaces \cite{diao2014barycentric, han2017barycentric}. Similar result can also be found in the relative-position-based network localization \cite{barooah2007estimation}. When each node can measure angles,
a semi-definite strategy is used to localize the free nodes iteratively \cite{jing2019angle1}. 
In the bearing-based network localization, Zhao has proved that an infinitesimally bearing rigid network is localizable \cite{zhao2016localizability}. 
In addition, the problem of how to make use of ratio-of-distance measurements
to localize the free nodes is still not solved.

Note that the aforementioned works assume that the measurements of all the nodes are of identical type.  In a large-scale network, different node may be equipped with different sensor and has different sensing capability.
It is more practical to consider 
the network localization with mixed measurements. Some nodes may measure only relative distances, while others may measure only relative bearing, angle, relative position, or ratio-of-distance.  
The existing works consider distributed localization with mixed distance and bearing measurements \cite{lin2015distributed, eren2011cooperative, stacey2017role}, but \cite{lin2015distributed} is limited to 2-D space and \cite{eren2011cooperative, stacey2017role} need a global coordinate frame. Another work \cite{lin2017mix} develops a distributed algorithm for 2-D network localization with mixed distance, bearing, and relative position measurements. Note that the 3-D case cannot be solved by trivially extending the results in \cite{lin2015distributed, lin2017mix}.

Different from the works \cite{lin2015distributed,eren2011cooperative, lin2017mix, stacey2017role}, we consider 3-D
network localization with mixed local relative measurements without any known global coordinate frame. Inspired by our recent work 
that studies 3-D
network localization with the same type of local relative measurements \cite{fang2020angle}, in this paper, we study the network localization with mixed types of measurements, which is more challenging.
Each node
can only have one of the five types of measurements (relative position, distance, relative bearing, angle, and ratio-of-distance measurements) about its neighboring nodes in its local coordinate frame.
A novel rigidity-theory-based distributed localization is developed to overcome the challenge due to the absence of a global coordinate frame. The key idea is to construct displacement constraints for the positions of the nodes by using mixed local relative measurements. Then, a linear distributed localization algorithm is proposed for each free node to estimate its own position by 
solving the displacement constraints. Moreover, algebraic condition and
graph condition are obtained 
to ensure global convergence of the distributed localization algorithm. 

The remaining parts of the paper are structured as follows: angle-displacement rigidity theory and problem statement are given in Section \ref{preli} and Section \ref{statement}, respectively. The localization problem with mixed local relative measurements is formulated in Section \ref{fomu}, where the corresponding algebraic and graph conditions for localizability are given. In addition, a distributed localization algorithm is proposed to ensure global convergence of the distributed localization algorithm.
Some numerical examples are given in Section \ref{simu} to illustrate the theoretical results. Section \ref{conc1} ends this paper with some conclusions.

\section{Preliminaries of Angle-displacement Rigidity Theory}\label{preli}

\subsection{Notations}

Let $\| \cdot \|$, $\text{Null}(\cdot)$,   $\text{dim}(\cdot)$, $\text{Span}(\cdot)$, $| \cdot|$, and $\text{Rank}(\cdot)$ denote the $L_2$ norm, null space, dimension, span, cardinality rank of a given vector or matrix, respectively.  The Kronecker product and the determinant of a square matrix are denoted by $\otimes$ and $\text{det}(\cdot)$, respectively.
${I}_d \in \mathbb{R}^{d \times d}$ is the identity matrix, and $\mathbf{1}_d \triangleq (1,\cdots,1)^T, \mathbf{0}_d\triangleq(0,\cdots,0)^T \in \mathbb{R}^d$. An undirected graph $\mathcal{G}=\{ \mathcal{V},\mathcal{E}\}$ consists of a vertex set $\mathcal{V}$ of elements called nodes and an edge set $\mathcal{E}  \subseteq \mathcal{V} \times \mathcal{V}$ of ordered pairs of nodes called edges, where $(i,j) \in \mathcal{E} \Leftrightarrow (j,i) \in \mathcal{E}$. The set of neighbors of node $i$ is denoted as $\mathcal{N}_i := \{ j \in \mathcal{V} : (i,j) \in \mathcal{E} \}$. Suppose $\Sigma_g$ is a global coordinate frame in $\mathbb{R}^3$.  Let $p_i, p_j \in \mathbb{R}^3$ be the positions of nodes $i, j$ in $\Sigma_g$, and $e_{ij} \triangleq p_j \!- \! p_i$ is the relative position in $\Sigma_g$.

\begin{figure}[t]
\centering
\includegraphics[width=0.65\linewidth]{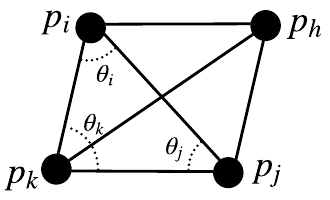}
\caption{Angle and displacement constraints based network. }
\label{moti2}
\end{figure}

Next, we introduce the concepts of angle constraint and displacement constraint, which will be used for 3-D network localization with mixed types of measurements.

\subsection{Angle Constraint and Displacement Constraint}\label{constraint}

\begin{lemma}\label{angle12}

\cite{fangtechnique} For any three different nodes $i,j,k$
in $\mathbb{R}^3$, 
denote $\theta_i, \theta_j, \theta_k \in [0, \pi]$ as the angles between $e_{ij}$ and $e_{ik}$, $e_{ji}$ and $e_{jk}$,  $e_{ki}$ and $e_{kj}$ respectively.  The angles $\theta_i, \theta_j, \theta_k \in [0, \pi]$ are determined uniquely by the parameters $w_{ik}, w_{ki}, w_{ij}, w_{ji}, w_{jk}, w_{kj}$ in \eqref{ang1}-\eqref{aa2}.
\begin{align}\label{ang1} 
 &w_{ik}e_{ik}^Te_{ij}+w_{ki}e_{ki}^Te_{kj}=0, \\
  \label{aa1}
 &w_{ij}e_{ij}^Te_{ik}+w_{ji}e_{ji}^Te_{jk}=0, \\
 \label{aa2}
 &w_{jk}e_{jk}^Te_{ji}+w_{kj}e_{kj}^Te_{ki}=0,
\end{align}
where $w_{ik}^2+w_{ki}^2 \neq 0$, $w_{ij}^2+w_{ji}^2 \neq 0$, and $w_{jk}^2+w_{kj}^2 \neq 0$. 
\end{lemma}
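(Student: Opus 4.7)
The plan is to reduce the three vectorial equations \eqref{ang1}--\eqref{aa2} to purely trigonometric relations among $\theta_i,\theta_j,\theta_k$, and then close the system with the triangle angle-sum identity.

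First I would substitute the dot-product identities $e_{ij}^T e_{ik}=\|e_{ij}\|\|e_{ik}\|\cos\theta_i$ together with the analogous expressions at nodes $j$ and $k$. Writing $a=\|e_{ij}\|$, $b=\|e_{ik}\|$, $c=\|e_{jk}\|$, each of the three equations collapses to a linear relation between a pair of cosines weighted by side lengths; for instance \eqref{ang1} reduces, after cancelling the common factor $b$, to $w_{ik}\, a\cos\theta_i+w_{ki}\, c\cos\theta_k=0$.

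Next I would invoke the law of sines $a/\sin\theta_k=b/\sin\theta_j=c/\sin\theta_i$ (the three nodes lie in a common plane and form a triangle) to eliminate the side lengths in favour of sines of the opposite angles. Equation \eqref{ang1} then becomes $w_{ik}\sin\theta_k\cos\theta_i+w_{ki}\sin\theta_i\cos\theta_k=0$, and dividing by $\cos\theta_i\cos\theta_k$ yields the clean tangent relation $w_{ik}\tan\theta_k+w_{ki}\tan\theta_i=0$. The same manipulation applied to \eqref{aa1} and \eqref{aa2} gives $w_{ij}\tan\theta_j+w_{ji}\tan\theta_i=0$ and $w_{jk}\tan\theta_k+w_{kj}\tan\theta_j=0$. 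The nondegeneracy conditions $w_{ik}^2+w_{ki}^2\neq 0$, $w_{ij}^2+w_{ji}^2\neq 0$ and $w_{jk}^2+w_{kj}^2\neq 0$ guarantee that each of these three relations genuinely fixes a ratio between the corresponding tangents.

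I would close the system using the triangle identity $\theta_i+\theta_j+\theta_k=\pi$, which is equivalent to $\tan\theta_i+\tan\theta_j+\tan\theta_k=\tan\theta_i\tan\theta_j\tan\theta_k$. Substituting the two independent tangent ratios obtained above reduces this to a single equation in one unknown, say $\tan\theta_k$; its solution, combined with the constraint that at most one of $\theta_i,\theta_j,\theta_k$ can exceed $\pi/2$, pins down the triple $(\theta_i,\theta_j,\theta_k)\in[0,\pi]^3$ uniquely.

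The main obstacle I expect is the treatment of degenerate configurations where a tangent is undefined or vanishes, i.e.\ when one angle equals $0$, $\pi/2$ or $\pi$ (right-angled or collinear configurations). In those cases the derivation must be carried out directly on the sine--cosine form, and the nondegeneracy hypotheses on the weight pairs must be used to rule out any spurious second solution and to recover uniqueness in the closed interval $[0,\pi]$.
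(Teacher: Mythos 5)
The paper does not actually prove this lemma --- it is imported verbatim from \cite{fangtechnique} --- so there is no in-paper argument to compare against; what follows assesses your proposal on its own terms. The generic part of your reduction is sound: the dot-product substitution, the cancellation of the common side length, the law of sines, and the resulting relations $w_{ik}\tan\theta_k+w_{ki}\tan\theta_i=0$ (and its two analogues) are all correct. Writing $\tan\theta_j=\alpha\tan\theta_i$ and $\tan\theta_k=\beta\tan\theta_i$ and substituting into $\tan\theta_i+\tan\theta_j+\tan\theta_k=\tan\theta_i\tan\theta_j\tan\theta_k$ gives $\tan^2\theta_i=(1+\alpha+\beta)/(\alpha\beta)$, and of the two roots $\pm t$ only one is compatible with $\theta_i+\theta_j+\theta_k=\pi$ (the negated triple consists of the supplementary angles and sums to $2\pi$), so for non-collinear, non-right-angled configurations the angles are indeed pinned down uniquely.

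The genuine gap is the collinear case, which is explicitly inside the lemma's scope ($\theta_i,\theta_j,\theta_k\in[0,\pi]$, and the paper stresses immediately after the statement that $p_i,p_j,p_k$ may be colinear). There your machinery is not merely singular at isolated points --- it is vacuous: all three sines vanish, the law of sines gives no relation between side lengths and angles, and every tangent ratio is $0/0$. Your proposed remedy of ``carrying out the derivation on the sine--cosine form'' does not repair this, because that form still eliminates $a,b,c$ via the law of sines. A correct treatment must argue directly on the signed dot products: collinearity forces each angle to be $0$ or $\pi$ and imposes a length identity such as $b=a+c$, after which the three $w$-ratios become $-c/a$, $c/(a+c)$, $(a+c)/a$; one must then check both that these ratios determine which angles are $0$ and which are $\pi$, and that no genuine triangle realizes the same three ratios (for these particular values one finds $1+\alpha+\beta=0$, forcing $\tan\theta_i=0$ in your equation, so no non-degenerate triangle exists --- but this verification is part of the proof, not an afterthought). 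The right-angle subcase (e.g.\ $w_{ki}=0$ forcing $e_{ik}^Te_{ij}=0$, hence $\theta_i=\pi/2$, with the remaining two angles determined from \eqref{aa2} and $\theta_j+\theta_k=\pi/2$) likewise needs an explicit short argument rather than a deferral.
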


Note that in 
\textit{Lemma \ref{angle12}}, $\theta_i, \theta_j, \theta_k \in [0, \pi]$ means that
$p_i, p_j, p_k$ can be either colinear or non-colinear. 

\begin{defn}\label{d1}
 Equations \eqref{ang1},  \eqref{aa1}, and \eqref{aa2} are formally defined as angle constraints for the nodes $i, j, k$  with respect to the edges $(i,j), (i,k), (j,k)$. 
\end{defn}

A
wide matrix $E_i=(e_{ij}, e_{ik}, e_{ih}, e_{il})$ $ \in \mathbb{R}^{3 \times 4}$ can be constructed based on five different nodes $i,j,k,h,l$. From the matrix theory, for any wide matrix $E_i$, there must be a non-zero vector $ \mu_i=(\mu_{ij}, \mu_{ik}, \mu_{ih}, \mu_{il})^T \in \mathbb{R}^4$ satisfying $E_i\mu_i= \mathbf{0}$, i.e.,

\begin{equation}\label{root}
     \mu_{ij}e_{ij}+\mu_{ik}e_{ik}+\mu_{ih}e_{ih} + \mu_{il}e_{il}= \mathbf{0},
\end{equation}
where $\mu_{ij}^2+\mu_{ik}^2+\mu_{ih}^2+\mu_{il}^2 \neq 0$.

\begin{defn}\label{d2}
Equation \eqref{root} is formally defined as a displacement constraint for node $i$ with respect to the edges $(i,j), (i,k), (i,h), (i,l)$.
\end{defn}

It is worth noting that the angle constraints \eqref{ang1}-\eqref{aa2} are constructed by relative positions rather than angles. 
Note that the anchor positions are known, that is, the relative positions among the anchors are known. Hence, the angle constraints among the anchors are available.
The parameters $w_{ik}$, $w_{ki}$, $w_{ij}$, $w_{ji}$, $w_{jk}$, $w_{kj}$ in \eqref{ang1}-\eqref{aa2} are obtained by the relative positions $e_{ij}, e_{ik}, e_{jk}$. For example,
the parameters $w_{ik}$ and $w_{ki}$ in \eqref{ang1} can be designed as
\begin{equation}\label{para}
\begin{array}{ll}
&  \! \begin{array}{lll} w_{ik} = \frac{1}{e_{ik}^Te_{ij}},  w_{ki} = \frac{-1}{e_{ki}^Te_{kj}}, 
& e_{ik}^Te_{ij}, e_{ki}^Te_{kj} \neq 0, \\ 
w_{ik} = 1,  w_{ki} = 0, & 
e_{ik}^Te_{ij}=0, e_{ki}^Te_{kj} \neq 0, \\ 
w_{ik} = 0,  w_{ki} = 1, & 
e_{ik}^Te_{ij} \neq 0,   e_{ki}^Te_{kj} = 0.\\
\end{array}
\end{array} \\
\end{equation}

\begin{remark}
Since the nodes $i,j,k$ are not collocated, the case $e_{ik}^Te_{ij} = 0,   e_{ki}^Te_{kj} = 0$ does not hold. Without loss of generality, suppose $e_{ik}^Te_{ij} \neq 0$. 
The parameters $w_{ik}, w_{ki}$ satisfying the angle constraint $w_{ik}e_{ik}^Te_{ij}+w_{ki}e_{ki}^Te_{kj}= 0$ are not unique, but the ratio of parameters $\frac{w_{ik}}{w_{ki}}=-\frac{e_{ki}^Te_{kj}}{e_{ik}^Te_{ij}}$ satisfying the angle constraint $w_{ik}e_{ik}^Te_{ij}+w_{ki}e_{ki}^Te_{kj}= 0$ is unique. The angles  $\theta_i, \theta_j, \theta_k$ between the anchor nodes $i,j,k$ are determined uniquely by these unique ratios of parameters in the angle constraints \eqref{ang1}-\eqref{aa2}.
\end{remark}

In this paper, the known anchor positions will be used to construct the angle constraints, which do not need any measurement. In the proposed network localization, the mixed types of measurements are only used for constructing the displacement constraints \eqref{root}. The details of how to make use of local relative measurements to construct the displacement constraint will be introduced in Section \ref{conc}.
Note that network localizability is used to specify whether the corresponding network structure is unique. The following angle-displacement rigidity theory will be used 
to explore the network localizability.

\subsection{Angle-displacement Rigidity Theory}

A three-dimensional network, represented by
$(\mathcal{G}, p)$, is an undirected graph $\mathcal{G}$ with its vertex $i$ mapped to point $p_i$, where $p=(p_1^T, \cdots, p_{n}^T)^T \in \mathbb{R}^{3n}$ is the configuration in $\mathcal{G}$.
The structure of a network can be determined by angle and displacement constraints. A simple example is given in Fig. \ref{moti2}. The structure of the network $\diamondsuit_{ijkh}(p)$ is determined by three angles $\theta_i, \theta_j, \theta_k$ and a displacement constraint $e_{hi}+e_{hj}-e_{hk}=0$, where $\theta_i, \theta_j, \theta_k$ are determined by three angle constraints
\eqref{ang1}-\eqref{aa2}.
Thus, the structure of the network $\diamondsuit_{ijkh}(p)$ can be decided by three angle constraints \eqref{ang1}-\eqref{aa2} and the displacement constraint $e_{hi}+e_{hj}-e_{hk}=0$.
The following angle-displacement rigidity theory answers that the structure of a network in $\mathbb{R}^3$ can be uniquely determined
by
a set of angle and displacement constraints up to directly similar transformations, i.e., translation, rotation, and scaling.

\subsubsection{Angle-displacement Function} Let $\Upsilon_{\mathcal{G}}=\{ ( i, j, k) \in \mathcal{V}^{3} : (i,j), (i,k), (j,k) \in \mathcal{E}, i< \! j \!<\! k \}$ with $|\Upsilon_{\mathcal{G}}|=m_r$. Based on Definition \ref{d1}, the angle function $B_{\Upsilon_{\mathcal{G}}}(p) : \mathbb{R}^{3n} \rightarrow \mathbb{R}^{m_r}$ is defined as
\begin{equation}\label{funa}
B_{\Upsilon_{\mathcal{G}}}(p)
\!=\!  (\cdots, w_{ik}e_{ik}^Te_{ij}+w_{ki}e_{ki}^Te_{kj}, \cdots)^T, 
\end{equation}
where $(i,j,k) \! \in \! \Upsilon_{\mathcal{G}}$, and $w_{ik}e_{ik}^Te_{ij}+w_{ki}e_{ki}^Te_{kj}=0$ with $w_{ik}^2+w_{ki}^2 \neq 0$. 

Let 
$\mathcal{X}_{\mathcal{G}}=\{ ( i, j, k, h, l) \in \mathcal{V}^{5} : (i,j), (i,k), $ $ (i,h),  (i,l)  \in \mathcal{E},   j \!<\! k \!<\! h \!<\! l\}$ with $|\mathcal{X}_{\mathcal{G}}|=m_d$. Based on Definition \ref{d2},
the displacement function $L_{\mathcal{X}_{\mathcal{G}}}(p) : \mathbb{R}^{3n} \rightarrow \mathbb{R}^{3m_d}$
is defined as 
\begin{equation}\label{func}
L_{\mathcal{X}_{\mathcal{G}}}(p)
\!=\! (\cdots, \mu_{ij}e_{ij}^T\!+\! \mu_{ik}e_{ik}^T\!+\! \mu_{ih}e_{ih}^T\!+\! \mu_{il}e_{il}^T, \cdots)^T,
\end{equation}
where $( i, j, k, h, l) \in \mathcal{X}_{\mathcal{G}}$, and $\mu_{ij}e_{ij}\!+\! \mu_{ik}e_{ik}\!+\! \mu_{ih}e_{ih}\!+\! \mu_{il}e_{il}=0$ with $\mu_{ij}^2+\mu_{ik}^2+\mu_{ih}^2+\mu_{il}^2 \neq 0$. Let $\mathcal{T}_{\mathcal{G}} = \Upsilon_{\mathcal{G}} \cup \mathcal{X}_{\mathcal{G}} $ and $m=3m_d+m_r$. Then, the angle-displacement function $f_{\mathcal{T}_{\mathcal{G}}}(p) : \mathbb{R}^{3n} \rightarrow \mathbb{R}^{m}$ is defined as
\begin{equation}\label{cust}
    f_{\mathcal{T}_{\mathcal{G}}}(p) = (B_{\Upsilon_{\mathcal{G}}}^T(p), L^T_{\mathcal{X}_{\mathcal{G}}}(p)  )^T.
\end{equation}

\begin{remark}
Note that different configurations may have different angle-displacement function. The customized
angle-displacement function $f_{\mathcal{T}_{\mathcal{G}}}(p)= \mathbf{0}$ for the configuration $p$ may not be applicable to a different configuration $q \neq p$, i.e.,  $f_{\mathcal{T}_{\mathcal{G}}}(q) = \mathbf{0}$ may not hold.
\end{remark}

\subsubsection{Angle-displacement Rigidity Matrix}

The angle-displacement rigidity matrix is defined as
\begin{equation}\label{rigiditym}
R(p) = \frac{\partial f_{\mathcal{T}_{\mathcal{G}}}(p)}{\partial p} \in \mathbb{R}^{m \times 3n}.
\end{equation}

Let $\delta p$ be a variation of the configuration $p$. If 
$R(p)\delta p= 0$, $\delta p$ is called an angle-displacement infinitesimal motion. Since  $ \mu_{ij}e_{ij}\!+\! \mu_{ik}e_{ik}\!+\! \mu_{ih}e_{ih}\!+\! \mu_{il}e_{il}=\mathbf{0}$ and $w_{ik}e_{ik}^Te_{ij}+w_{ki}e_{ki}^Te_{kj}=0$, the angle-displacement infinitesimal motions include translations, rotations, and scalings. 
An angle-displacement infinitesimal motion is called trivial if it corresponds to a translation, a rotation and a scaling of the entire network.

\begin{lemma}\label{inva}
\cite{fangtechnique} The trivial motion space for angle-displacement rigidity in $\mathbb{R}^3$ is $S=S_t \cup S_r \cup S_s$, where $S_t=\{ \mathbf{1}_n\otimes {I}_3 \}$ is the space formed by translations , $S_r= \{(I_n \otimes A)p, A+A^T =\mathbf{0}, A \in \mathbb{R}^{3 \times 3} \}$ is the space formed by rotations, and $S_s= \text{Span}(p)$ is the space formed by scalings.
\end{lemma}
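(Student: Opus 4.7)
The plan is to verify that each element of $S_t$, $S_r$, and $S_s$ is an angle-displacement infinitesimal motion, i.e.\ lies in $\text{Null}(R(p))$, and that each corresponds to a translation, rotation, or scaling of the whole network (which matches the stated definition of trivial). Since $R(p) = \partial f_{\mathcal{T}_{\mathcal{G}}}(p)/\partial p$, the condition $R(p)\delta p = 0$ is equivalent to requiring that the first-order variation of every summand of $B_{\Upsilon_{\mathcal{G}}}(p)$ in \eqref{funa} and of $L_{\mathcal{X}_{\mathcal{G}}}(p)$ in \eqref{func} vanishes. The key computational tool is the identity $\delta e_{ij} = \delta p_j - \delta p_i$, which reduces everything to a case analysis on $\delta p$.

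For a translation $\delta p = \mathbf{1}_n \otimes v$ with $v \in \mathbb{R}^3$, I would observe that $\delta e_{ij} = v - v = 0$ for every edge, so both the angle and displacement constraint variations are zero trivially. For a rotation $\delta p = (I_n \otimes A)p$ with $A + A^T = \mathbf{0}$, one has $\delta e_{ij} = A e_{ij}$. The variation of a generic angle constraint then becomes
\begin{equation*}
w_{ik}\bigl[(Ae_{ik})^T e_{ij} + e_{ik}^T(Ae_{ij})\bigr] + w_{ki}\bigl[(Ae_{ki})^T e_{kj} + e_{ki}^T(Ae_{kj})\bigr] = w_{ik}e_{ik}^T(A^T+A)e_{ij} + w_{ki}e_{ki}^T(A^T+A)e_{kj},
\end{equation*}
which vanishes by skew-symmetry of $A$. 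The variation of a displacement constraint yields $A\bigl(\mu_{ij}e_{ij}+\mu_{ik}e_{ik}+\mu_{ih}e_{ih}+\mu_{il}e_{il}\bigr) = A\mathbf{0} = \mathbf{0}$ by \eqref{root}. For a scaling $\delta p = cp$ with $c \in \mathbb{R}$, $\delta e_{ij} = c e_{ij}$, so the angle-constraint variation factors as $2c(w_{ik}e_{ik}^Te_{ij} + w_{ki}e_{ki}^Te_{kj}) = 0$ using \eqref{ang1}, and the displacement variation factors as $c(\mu_{ij}e_{ij}+\mu_{ik}e_{ik}+\mu_{ih}e_{ih}+\mu_{il}e_{il}) = \mathbf{0}$ using \eqref{root}.

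Combining these three cases shows $S_t \cup S_r \cup S_s \subseteq \text{Null}(R(p))$. Conversely, since trivial motions are defined at the end of Section~\ref{constraint} to be exactly translations, rotations, and scalings of the entire network, and the three sets $S_t$, $S_r$, $S_s$ exhaust precisely the infinitesimal generators of these rigid-plus-scaling transformations, the trivial motion space coincides with $S_t \cup S_r \cup S_s$, giving the claim.

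I do not anticipate a serious obstacle: the verification is a direct linearization exercise, and the algebraic structure of \eqref{ang1} and \eqref{root} makes each case a one-line cancellation. The only delicate point is keeping the bookkeeping straight in the rotation case, where one must write out both cross terms $\delta e_{ik}^T e_{ij}$ and $e_{ik}^T \delta e_{ij}$ before invoking $A + A^T = \mathbf{0}$; this is where I would take the most care.
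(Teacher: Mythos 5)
Your verification is correct, and there is nothing to compare it against in detail: the paper does not prove Lemma~\ref{inva} at all, it simply imports it by citation from \cite{fangtechnique}. Your direct linearization is the natural self-contained argument, and it is consistent with the identities the paper does record, namely the explicit rows $\eta_d$, $\eta_r$ of $R(p)$ in \eqref{infid}--\eqref{infi} and the relation $R(p)p=(2B_{\Upsilon_{\mathcal{G}}}^T(p),L_{\mathcal{X}_{\mathcal{G}}}^T(p))^T$, which encodes exactly your scaling case (the variation along $p$ is twice the angle-constraint value and once the displacement-constraint value, both zero at the true configuration). Two minor points worth being aware of: the substantive content of the lemma is only the inclusion $S\subseteq\text{Null}(R(p))$, since the converse (that $S$ exhausts the trivial motions) is definitional given how the paper defines ``trivial''; and the set notation $S_t\cup S_r\cup S_s$ is really shorthand for the span of these three subspaces (a general trivial motion $\delta p_i=cp_i+Ap_i+v$ is a sum, not a member of the union), but you are following the paper's own convention there, so this is a notational quibble rather than a gap.
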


Let 
\begin{equation}
    \begin{array}{ll}
         &  \eta_d^T = \frac{\partial (\mu_{ij}e_{ij}\!+\! \mu_{ik}e_{ik}\!+\! \mu_{ih}e_{ih}\!+\! \mu_{il}e_{il}) }{\partial p} ,\\
         &  \eta_r^T = \frac{\partial (w_{ik}e_{ik}^Te_{ij}+w_{ki}e_{ki}^Te_{kj}) }{\partial p},
    \end{array}
\end{equation}
where $\eta_d^T$ and $\eta_r^T$ are, respectively,  arbitrary rows of $R(p)$ corresponding to displacement constraint and angle constraint shown as
\begin{equation}\label{infid}
 \eta_d\!=\![\mathbf{0}, 	\!-\!\mu_{ij}\!-\!\mu_{ik}\!-\!\mu_{ih}\!-\!\mu_{il}, \mathbf{0}, \mu_{ij}, \mathbf{0}, \mu_{ik}, \mathbf{0}, \mu_{ih}, \mathbf{0}, \mu_{il},  \mathbf{0}]^T \otimes I_3.   
\end{equation}

\begin{equation}\label{infi}
	\eta_r \!=\! \left[ \!
	\begin{array}{c}
	\mathbf{0} \\
	2w_{ik}p_i+ (w_{ki}-w_{ik})p_j-(w_{ik}+w_{ki})p_k\\
	\mathbf{0} \\
	(w_{ki}-w_{ik})p_i+(w_{ik}-w_{ki})p_k \\
	\mathbf{0} \\
	-(w_{ik}+w_{ki})p_i+(w_{ik}-w_{ki})p_j+2w_{ki}p_k\\
	\mathbf{0} 
	\end{array}
	\right].
\end{equation}

Based on \eqref{infid} and \eqref{infi}, we have
\begin{equation}
    R(p)p = (2B_{\Upsilon_{\mathcal{G}}}^T(p), L^T_{\chi_{\mathcal{G}}}(p)  )^T \in \mathbb{R}^m.
\end{equation}
Then, we obtain
\begin{equation}\label{anre}
p^TR(p)^TR(p)p = 4\| B_{\Upsilon_{\mathcal{G}}}(p) \|^2 + \|L_{\chi_{\mathcal{G}}}(p)\|^2.
\end{equation}

\begin{defn}\label{d3}
\cite{fang2020angle} A network ($\mathcal{G}, p$) is infinitesimally angle-displacement rigid in $\mathbb{R}^3$ if all the angle-displacement infinitesimal motions are trivial. 
\end{defn}

From \textit{Lemma \ref{inva}}, we can know that for 
an infinitesimally angle-displacement rigid network ($\mathcal{G}, p$), the condition $\text{dim}(\text{Null}(R(p)))\!=\!7$ holds. Next, 
we will introduce the problem of 
network localization with mixed types of measurements.

\section{Problem Statement}\label{statement}

The objective is to develop a distributed algorithm for localizing the positions of all unknown nodes in a network with mixed types of measurements including local relative position, distance, local relative bearing, angle, and ratio-of-distance measurements.
Some nodes may measure only distances, while others may  measure only local relative bearings, angles, local relative positions, or
ratio-of-distances.
In addition, 
we assume that the orientation differences between the local coordinate frames of the nodes and global coordinate frame are unknown.

Consider a static network ($\mathcal{G}, p$) in $\mathbb{R}^3$.
Suppose the locations of $n_a$ anchor nodes are given and the locations of $n_f$ free nodes are to be estimated $(n_a\!+\!n_f\!=\!n)$. Denote $\mathcal{V}_a=\{ 1,\cdots, n_a \}$, $\mathcal{V}_f\!=\!\{ n_a \!+\! 1,  \cdots, n \}$, and $\mathcal{V} = \mathcal{V}_a \cup \mathcal{V}_f$. Denote $p_a = (p_1^T, \cdots, p_{n_a}^T)^T$, $p_f = (p_{n_a\!+\!1}^T, \cdots, p_{n}^T)^T$, and  $p = p_a \cup p_f$. Denote $\Sigma_i$ as the 
local coordinate frame of node $i$. Let $Q_i \in SO(3)$ be the unknown rotation matrix from $\Sigma_i$ to $\Sigma_g$, where $\Sigma_g$ is the global coordinate frame.
Define
\begin{equation}\label{local}
e_{ij} = Q_i e_{ij}^{i}, \ \ g_{ij}^{i}= \frac{e_{ij}^i}{d_{ij}}, \ \ \theta_{ijk}=g_{ij}^{i}g_{ik}^{i}, \ \ r_{ijk}=\frac{d_{ij}}{d_{ik}}, 
\end{equation}
where $e_{ij}^{i}$ is the local
relative position in $\Sigma_i$. $d_{ij}=\|e_{ij}\|$ is the distance.
$g_{ij}^{i}$ is the local relative bearing. $\theta_{ijk}$ is the angle between $e_{ij}$ and $e_{ik}$. $r_{ijk}$ is the ratio-of-distance between $d_{ij}$ and $d_{ik}$. 
Since there are
five kinds of measurements in a local coordinate frame,
the nodes 
can be divided into five categories shown as
\begin{enumerate}
    \item $i \in \mathcal{D}$: Node $i$ can measure the distance $d_{ij}$ to its neighboring node $j \in \mathcal{N}_i$.
    \item $i \in \mathcal{B}$: Node $i$ can measure the local relative bearing $g_{ij}^{i}$ with respect to its neighboring node $j \in \mathcal{N}_i$.
    \item $i \in \mathcal{R}$: Node $i$ can measure the local relative position $e_{ij}^{i}$ to its neighboring node $j \in \mathcal{N}_i$.
    \item $i \in \mathcal{A}$: Node $i$ can measure the angle $\theta_{ijk}$ between $e_{ij}$ and $e_{ik}$, where $j, k \in \mathcal{N}_i$. 
    \item $i \in \mathcal{R}o\mathcal{D}$: Node $i$ can measure the ratio-of-distance $r_{ijk}$ between $d_{ij}$ and $d_{ik}$, where $j, k \in \mathcal{N}_i$. 
\end{enumerate}

\begin{remark}
The work in \cite{cao2019ratio} introduces the sensors that can obtain the ratio-of-distance measurements.
\end{remark}

Then, the problem of network localization with mixed types of measurements is described below. 

\begin{problem}
Consider a static network ($\mathcal{G}, p$) with an undirected graph $\mathcal{G}=\{ \mathcal{V},\mathcal{E}\}$, where each node can measure only one of the five kinds of local relative measurements 
$V= \mathcal{R} \cup  \mathcal{B} \cup\mathcal{A}  \cup  \mathcal{D} \cup\mathcal{R}o\mathcal{D}$. Given arbitrary initial estimates $\hat p_i(0), i \in \mathcal{V}_f$, develop a distributed localization algorithm for each free node $i$ such that 
\begin{equation}
    \lim\limits_{t \rightarrow \infty} \hat p_i(t) =p_i,
\end{equation}
where ${\hat p}_i$ is the estimate of $p_i$.
\end{problem}

Next, the method for network localization with mixed types of measurements is presented.

\section{Network Localization with Mixed Types of Measurements} \label{fomu}

In this paper, we consider an undirected graph.
Each node can measure only one of the five 
types of local relative measurements and different nodes may have different measurement types
shown in Fig. \ref{case}. Before introducing the main results, the distance-based displacement constraint is introduced.

\subsection{Distance-based Displacement Constraint}\label{disdis}

In this part, we will introduce how to construct the displacement constraint based on the distance measurements. 
For the node $i$ and its neighbors $j,k,h,l$, define distance matrix $D$ as
\begin{equation}\label{distance}
    D = \left[ \!
\begin{array}{c c c c c}
0 & d_{ij}^2 & d_{ik}^2  & d_{ih}^2 &  d_{il}^2 \\
d_{ji}^2 & 0 & d_{jk}^2 & d_{jh}^2 & d_{jl}^2 \\
d_{ki}^2 & d_{kj}^2 & 0 & d_{kh}^2 & d_{kl}^2 \\
d_{hi}^2 & d_{hj}^2 & d_{hk}^2 & 0 & d_{hl}^2 \\
d_{li}^2 & d_{lj}^2 & d_{lk}^2 & d_{lh}^2 & 0
\end{array}
\right].
\end{equation}

The volume $V_{jkhl}$ of the tetrahedron formed by the nodes $p_j,p_k,p_h,p_l$ can be calculated with distance measurements through Cayley-Menger determinant \cite{sippl1986cayley}. That is, 
\begin{equation}\label{cay1}
\begin{array}{ll}
     & 288V_{jkhl}^2 
    = \\
     & \!\text{det} \! \left ( \!
	\begin{array}{ccc}
	2d^2_{jk} &d^2_{jk}\!+\!d^2_{jh}\!-\!d^2_{kh} &d^2_{jk}\!+\!d^2_{jl}\!-\!d^2_{kl}  \\
	d^2_{jk}\!+\!d^2_{jh}\!-\!d^2_{kh} & 2d^2_{jh} & d^2_{jh}\!+\!d^2_{jl}\!-\!d^2_{hl}\\
	d^2_{jk}\!+\!d^2_{jl}\!-\!d^2_{kl} & d^2_{jh}\!+\!d^2_{jl}\!-\!d^2_{hl} & 2d^2_{jl} \\
	\end{array}
	\! \right). 
\end{array}
\end{equation}

\begin{defn}
If a node $i$ with respect to its neighboring nodes $j,k,h,l$ satisfies the following equations
\begin{equation}\label{baryd}
    \begin{array}{ll}
         &  p_i = \mu_{ij}p_{j}+\mu_{ik}p_{k}+\mu_{ih}p_{h} + \mu_{il}p_{l},\\
         & \mu_{ij}+\mu_{ik}+\mu_{ih}+\mu_{il}=1,
    \end{array}
\end{equation}
then $\{\mu_{ij}, \mu_{ik}, \mu_{ih}, \mu_{il} \}$ is called the barycentric coordinate of node $i$ with respect to the nodes $j,k,h,l$.
\end{defn}

For the volume $V_{jkhl}$ \eqref{cay1},
there are two cases:
\begin{enumerate}[(a)]
\item If $V_{jkhl}^2 \neq 0$,  $p_j,p_k,p_h,p_l$ are non-coplanar. The work in \cite{han2017barycentric} provides a way to obtain the barycentric coordinate $\{\mu_{ij}, \mu_{ik}, \mu_{ih}, \mu_{il} \}$ of node $i$ with respect to the nodes $j, k, h, l$ by only using the distance measurements shown in Algorithm \ref{disa}. This algorithm uses the fact that a congruent framework of the subnetwork consisting of the node and its neighbors
has the same barycentric coordinate. Note that \eqref{baryd} can be rewritten as a displacement constraint shown as
\begin{equation}\label{disd}
     \mu_{ij}e_{ij}+\mu_{ik}e_{ik}+\mu_{ih}e_{ih} + \mu_{il}e_{il}= \mathbf{0}.
\end{equation}

\begin{algorithm}
\caption{Distance-based Barycentric Coordinate ($V_{jkhl} \neq 0$) }
\label{disa}
\begin{algorithmic}[1]
\State Available information: Distance measurements among the nodes $p_i,p_j,p_k,p_h,p_l$. Denote  $(\mathcal{\bar G}, \bar p)$ as a subnetwork with $\bar p=(p_i^T,p_j^T,p_k^T,p_h^T,p_l^T)^T$.
\State Based on the distance matrix $D$ \eqref{distance}, obtaining a congruent network $(\mathcal{\bar G}, \bar q) \cong (\mathcal{\bar G}, \bar p)$ with $\bar q=(q_i^T,q_j^T,q_k^T,q_h^T,q_l^T)^T$ by Algorithm $1$ in \cite{han2017barycentric}, where $q_i, q_j, q_k, $ $ q_h, q_l \in \mathbb{R}^3$. 
\State Based on $\bar q$, calculating the barycentric coordinate of node $p_i$ with respect to its neighboring nodes $p_j,p_k,p_h,p_l$ in \eqref{baryd} by 
\begin{equation}\label{al1}
\mu_{ij}=\frac{V_{iklh}}{V_{jklh}}, \mu_{ik}=\frac{V_{jilh}}{V_{jklh}}, \mu_{ih}=\frac{V_{jkih}}{V_{jklh}}, \mu_{il}=\frac{V_{jkli}}{V_{jklh}},
\end{equation}
where
\begin{equation}\label{vol1}
\begin{array}{ll}
& V_{iklh} = \frac{1}{6} \text{det}(\left[ \!
\begin{array}{c c c c}
1 & 1 & 1  & 1  \\
q_i & q_k & q_l &q_h
\end{array}
\right]),  \\
& V_{jilh} = \frac{1}{6} \text{det}(\left[ \!
\begin{array}{c c c c}
1 & 1 & 1  & 1  \\
q_j & q_i & q_l &q_h
\end{array}
\right]), \\
& V_{jkih} = \frac{1}{6} \text{det}(\left[ \!
\begin{array}{c c c c}
1 & 1 & 1  & 1  \\
q_j & q_k & q_i &q_h
\end{array}
\right]), \\
& V_{jkli} = \frac{1}{6} \text{det}(\left[ \!
\begin{array}{c c c c}
1 & 1 & 1  & 1  \\
q_j & q_k & q_l &q_i
\end{array}
\right]), \\
& V_{jklh} = \frac{1}{6} \text{det}(\left[ \!
\begin{array}{c c c c}
1 & 1 & 1  & 1  \\
q_j & q_k & q_l &q_h
\end{array}
\right]). 
\end{array}
\end{equation}
\end{algorithmic}
\end{algorithm}

\item If $V_{jkhl}^2 = 0$, $p_j,p_k,p_h,p_l$ are coplanar. 
The area $S_{jkh}$ of the triangle formed by the nodes $p_j,p_k,p_h$ can also be calculated with distance measurements through Cayley-Menger determinant \cite{sippl1986cayley}. That is,
\begin{equation}\label{cay2}
 16S_{jkh}^2 =    \text{det} \left ( \!
	\begin{array}{ccc}
	2d^2_{jk} &d^2_{jk}\!+\!d^2_{jh}\!-\!d^2_{kh} \\ d^2_{jk}\!+\!d^2_{jh}\!-\!d^2_{kh} & 2d^2_{jh} \\
	\end{array}
	\right). 
\end{equation}
\end{enumerate}

For the area $S_{jkh}$ \eqref{cay2}, there are also two cases: 
\begin{enumerate}[(i)]
\item If $S_{jkh}^2 \neq 0$, 
$p_j,p_k,p_h$ are non-colinear.
The work in \cite{diao2014barycentric} provides a way to obtain the barycentric coordinate $\{\mu_{lj}, \mu_{lk}, \mu_{lh} \}$ of node $l$ with respect to its neighboring nodes $j, k, h$ in \eqref{ds11} by only using the distance measurements shown in Algorithm \ref{disa1}. 

\begin{equation}\label{ds11}
\begin{array}{ll}
     &  p_l = \mu_{lj}p_j + \mu_{lk}p_k + \mu_{lh}p_h,\\
     & \mu_{lj}+\mu_{lk}+\mu_{lh}=1.
\end{array}
\end{equation}

Note that \eqref{ds11} can also be rewritten as a displacement constraint shown as
\begin{equation}\label{disd1}
     \mu_{ij}e_{ij}+\mu_{ik}e_{ik}+\mu_{ih}e_{ih} + \mu_{il}e_{il}= \mathbf{0},
\end{equation}
where $\mu_{ij}\!=\!-\mu_{lj}, \mu_{ik}\!=\! -\mu_{lk}, \mu_{ih}\!=\! -\mu_{lh}, \mu_{il}\!=\!1$.

\begin{algorithm}
\caption{Distance-based Barycentric Coordinate ($V_{jkhl} = 0, S_{jkh} \neq 0$)}
\label{disa1}
\begin{algorithmic}[1]
\State Available information: Distance measurements among the nodes $p_j,p_k,p_h,p_l$. 
\State 
The absolute values of the barycentric coordinate $\mu_{lj}, \mu_{lk}, \mu_{lh}$ are calculated by 
\begin{equation}\label{al2}
\| \mu_{lj} \| = \frac{\|S_{lkh}\|}{\|S_{jkh}\|}, \| \mu_{lk} \| = \frac{\|S_{lhj}\|}{\|S_{jkh}\|}, \| \mu_{lh} \| = \frac{\|S_{ljk}\|}{\|S_{jkh}\|},
\end{equation}
where $\|S_{jkh}\|$ and others can be solved according to \eqref{cay2}. The absolute values $\| \mu_{lj} \|, \| \mu_{lk} \|, \| \mu_{lh} \|$ satisfy
\begin{equation}\label{sig}
\begin{array}{ll}
     &    \mu_{lj} = \sigma_{lj} \| \mu_{lj} \|, \mu_{lk} = \sigma_{lk} \| \mu_{lk} \|, \mu_{lh} = \sigma_{lh} \| \mu_{lh} \|, \\
     &  \sigma_{lj} \| \mu_{lj} \| + \sigma_{lk} \| \mu_{lk} \|+ \sigma_{lh} \| \mu_{lh} \|=1,
\end{array}
\end{equation}
where $ \sigma_{lj},  \sigma_{lk},  \sigma_{lh}$ take values of either $1$ or $-1$. 
\State Calculating $ \sigma_{lj},  \sigma_{lk}, \sigma_{lh}$ by distance measurements through Algorithm $1$ in the Diao' work \cite{diao2014barycentric}.
\end{algorithmic}
\end{algorithm}

\item If $S_{jkh}^2 \!=\! 0$, $p_j,p_k,p_h$ are colinear. There are three cases: $d_{jk}\!+\!d_{kh}\!=\!d_{jh}$; $d_{jk}\!+\!d_{jh}\!=\!d_{kh}$; $d_{jh}\!+\!d_{kh}\!=\!d_{jk}$.
If $d_{jk}\!+\!d_{kh}\!=\!d_{jh}$, we have
\begin{equation}\label{ds121}
    e_{jk} = \frac{d_{jk}}{d_{kh}}e_{kh}.
\end{equation}

Equation \eqref{ds121} can be rewritten as a displacement constraint shown as
\begin{equation}\label{disd2}
     \mu_{ij}e_{ij}+\mu_{ik}e_{ik}+\mu_{ih}e_{ih} + \mu_{il}e_{il}= \mathbf{0},
\end{equation}
where $\mu_{ij}\!=\!-1, \mu_{ik}\!=\!1\!+\!\frac{d_{jk}}{d_{kh}}, \mu_{ih}\!=\! - \frac{d_{jk}}{d_{kh}}, \mu_{il}\!=\! 0$. If $d_{jk}\!+\!d_{jh}\!=\!d_{kh}$ or $d_{jh}\!+\!d_{kh}\!=\!d_{jk}$, the corresponding displacement constraint \eqref{disd2} can be modified accordingly. The procedure of constructing a distance-based displacement constraint among the nodes $p_i,p_j,p_k,p_h,p_l$ in $\mathbb{R}^3$ is given in \text{Algorithm \ref{array-sum}}.
\end{enumerate}

\begin{algorithm}
\caption{Distance-based Displacement Constraint }
\label{array-sum}
\begin{algorithmic}[1]
\State {Available information: Distance measurements among the nodes $p_i,p_j,p_k,p_h,p_l$.}
\State \textbf{If} $p_j,p_k,p_h,p_l$ 
are non-coplanar $V_{jkhl} \neq 0$ \eqref{cay1}, \textbf{do}
\State \ \ \ \ Constructing a displacement constraint by \eqref{disd};
\State \textbf{Else}
\State \ \ \ \  \textbf{If} $p_j,p_k,p_h$ 
are non-colinear $S_{jkh} \neq 0$ \eqref{cay2}, \textbf{do} 
\State \ \ \ \  \ \ \ \ Constructing a displacement constraint by \eqref{disd1};
\State \ \ \ \  \textbf{Else}
\State \ \ \ \  \ \ \ \ Constructing a displacement constraint by \eqref{disd2};
\State \ \ \ \  \textbf{End}
\State   \textbf{End}
\end{algorithmic}
\end{algorithm}

\begin{figure*}[t]
\centering
\includegraphics[width=1\linewidth]{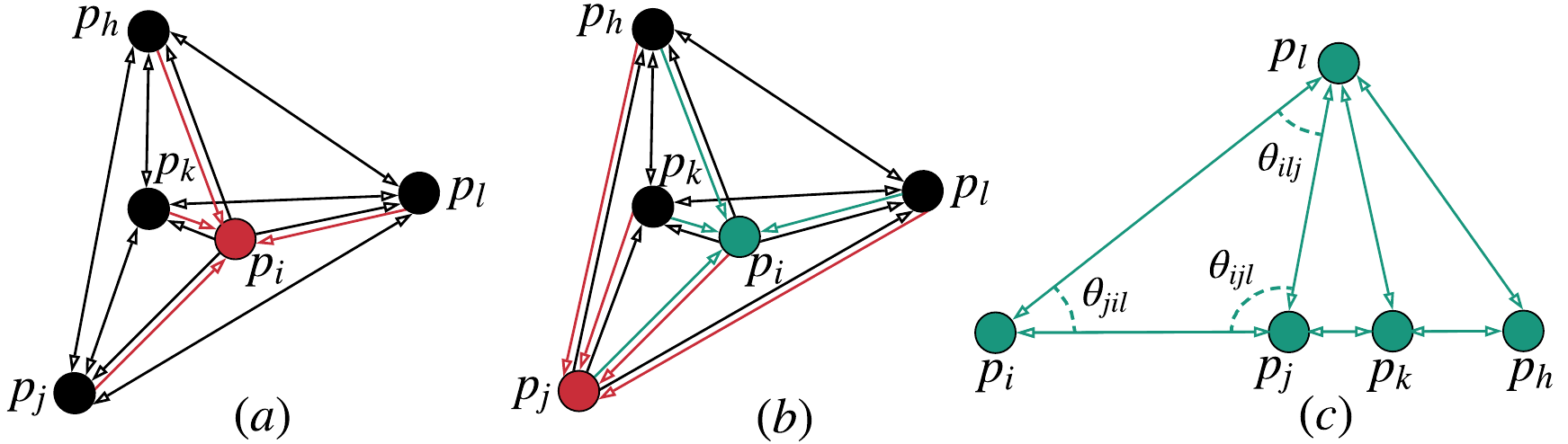}
\caption{The red nodes indicate that they can measure local relative positions. The green nodes mean that they can measure one of the following four types of local relative measurements
(local relative bearing, angle, distance, and ratio-of-distance). The black nodes mean that
they can measure only one of the five types of local relative measurements.   $(a)$ Node $i \in \mathcal{R}$. $(b)$ Node $i \in \mathcal{B} \cup\mathcal{A}  \cup  \mathcal{D} \cup\mathcal{R}o\mathcal{D}$ and Node $j \in \mathcal{R}$. $(c)$ Node $i,j,k,h,l \in  \mathcal{B} \cup\mathcal{A}  \cup  \mathcal{D} \cup\mathcal{R}o\mathcal{D}$.  }
\label{case}
\end{figure*}

\subsection{Construction of Angle Constraints and Displacement Constraints}\label{conc}

\begin{assumption}\label{ass3}
No two nodes are collocated in $\mathbb{R}^3$. Each anchor node $i$ has at least two neighboring anchor nodes $j,k$, where $i,j,k$ are mutual neighbors. Each free node $i$ has at least four neighboring nodes $j,k,h,l$, where $i,j,k,h,l$ are mutual neighbors and non-colinear.
\end{assumption}

Note that network localizability is used to specify whether the network structure is unique. 
Since the network structure can be determined by angle 
constraints and displacement constraints, to analyze network localizability, 
we need to obtain 
angle constraints and displacement constraints.

\subsubsection{Angle Constraint}\label{ag1}

The known anchor positions will be used to construct the angle constraints, which do not need any measurement. 
Under Assumption \ref{ass3}, $e_{ij}, e_{ik}, e_{jk}$ are known among the anchors  $i,j,k$. Then, the parameters $w_{ik}$, $w_{ki}$, $w_{ij}$, $w_{ji}$, $w_{jk}$, $w_{kj}$ in the angle constraints \eqref{ang1}-\eqref{aa2} can be obtained by \eqref{para}. 
\subsubsection{Displacement Constraint} Different from the angle constraints that are constructed without any measurement, 
the local relative measurements will be used to construct the displacement constraints \eqref{root}.
Under Assumption \ref{ass3}, 
each free node $i$ can communicate with its four neighbors $j,k,h,l$. Then,
the displacement constraint among $i,j,k,h,l$ can be obtained by the mixed types of measurements. Without loss of generality, 
we consider two cases: $i \in \mathcal{R}$ and $i \in \mathcal{B} \cup\mathcal{A}  \cup  \mathcal{D} \cup\mathcal{R}o\mathcal{D}$. 

\textbf{Case $\mathbf{(1)}$}. Node $i \in \mathcal{R}$. Node $i$ can measure local relative positions
$e_{ij}^i, e_{ik}^i, e_{ih}^i, e_{il}^i$ shown in Fig. \ref{case}$(a)$. In this case, we do not care the measurement types of nodes $j,k,l,h$ as their measurements will not be used. 
Since $(e^i_{ij}, e^i_{ik}, e^i_{ih}, e^i_{il}) \in \mathbb{R}^{3 \times 4}$, there must be a non-zero vector $ (\mu_{ij}, \mu_{ik}, \mu_{ih}, \mu_{il})^T \in \mathbb{R}^4$ satisfying 
\begin{equation}\label{wmi}
\left[ \!
\begin{array}{c c c c}
e_{ij}^{i} & e_{ik}^{i}  & e_{ih}^{i} &  e_{il}^{i} \\
\end{array}
\right]  \left[ \!
	\begin{array}{c}
	\mu_{ij} \\
	\mu_{ik} \\
	\mu_{ih} \\
	\mu_{il}
	\end{array}
	\right] = \mathbf{0}.
\end{equation}

Note that
$e_{ij} = Q_ie_{ij}^{i}, e_{ik} = Q_ie_{ik}^{i},e_{ih} = Q_ie_{ih}^{i}, e_{il} = Q_ie_{il}^{i}$. Left multiplying $Q_i$ on both sides of \eqref{wmi}, we can obtain the displacement constraint 
\begin{equation}\label{r1}
    \mu_{ij}e_{ij}+\mu_{ik}e_{ik}+\mu_{ih}e_{ih} + \mu_{il}e_{il}= \mathbf{0}.
\end{equation}

From  \eqref{r1}, we can know that
the parameters $\mu_{ij}, \mu_{ik}$, $\mu_{ih}, \mu_{il}$ in
\eqref{r1} can be calculated by solving \eqref{wmi}.

\textbf{Case $\mathbf{(2)}$}. Node $i \in \mathcal{B} \cup\mathcal{A}  \cup  \mathcal{D} \cup\mathcal{R}o\mathcal{D}$. There are two situations: $(\mathbf{2.1})$ at least one neighbor of node $i$ in $\mathcal{R}$, $(\mathbf{2.2})$ no neighbor of node $i$ in $\mathcal{R}$.

\textbf{Case} $(\mathbf{2.1})$. Consider at least one neighbor of node $i$ in $\mathcal{R}$, say $j \in \mathcal{R}$ shown in Fig. \ref{case}$(b)$. Node $i$ can obtain $e_{ji}^j, e_{jk}^j, e_{jh}^j, e_{jl}^j$ from node $j$ by communication. 
Similar to \eqref{wmi} and \eqref{r1},
we can obtain a displacement constraint shown as
\begin{equation}\label{r2}
    \mu_{ji}e_{ji}+\mu_{jk}e_{jk}+\mu_{jh}e_{jh} + \mu_{jl}e_{jl}= \mathbf{0},
\end{equation}
where the non-zero vector $ (\mu_{ji}, \mu_{jk}, \mu_{jh}, \mu_{jl})^T \in \mathbb{R}^4$ satisfying \eqref{r2} can be calculated by solving its equivalent equation \eqref{wmj}.
\begin{equation}\label{wmj}
\left[ \!
\begin{array}{c c c c}
e_{ji}^{j} & e_{jk}^{j}  & e_{jh}^{j} &  e_{jl}^{j} \\
\end{array}
\right]  \left[ \!
	\begin{array}{c}
	\mu_{ji} \\
	\mu_{jk} \\
	\mu_{jh} \\
	\mu_{jl}
	\end{array}
	\right] = \mathbf{0}.
\end{equation}

\begin{remark}
For the case $(\mathbf{2.1})$, node $i$ makes use of node $j$'s measurements to construct the displacement constraint \eqref{r2}. 
\end{remark}

\textbf{Case} $(\mathbf{2.2})$. Consider no neighbor of node $i$ in $\mathcal{R}$, that is, $i, j, k, h, l \in  \mathcal{B} \cup\mathcal{A} \cup \mathcal{D}  \cup \mathcal{R}o\mathcal{D}$. 
Under Assumption \ref{ass3}, node $i$ and its four neighboring nodes $j,k,h,l$ are non-colinear. Hence, there must be six triangles among the nodes $i,j,k,h,l$ shown in Fig. \ref{case}$(c)$, where  
the six triangles are: $\bigtriangleup_{ijl}(p)$, $\bigtriangleup_{ikl}(p)$, $\bigtriangleup_{ihl}(p)$, $\bigtriangleup_{jkl}(p)$, $\bigtriangleup_{jhl}(p)$, $\bigtriangleup_{khl}(p)$. 
It is shown in Section \ref{disdis} that the parameters $\mu_{ij}, \mu_{ik}, \mu_{ih}, \mu_{il}$ in the 
displacement constraint \eqref{r1} can be obtained by the distance matrix $D$ \eqref{distance} through \text{Algorithm \ref{array-sum}}, but the  distance matrix $D$ \eqref{distance} is usually unavailable in a network with mixed local relative measurements.
We find that
the following ratio-of-distance matrix $D_r$ \eqref{ratiom} can be obtained by using the mixed local relative measurements. 
\begin{equation}\label{ratiom}
D_r = \frac{D}{d_{ij}^2}.   
\end{equation}

From \textit{Lemma \ref{inva}}, we can know that
the scaling motions can preserve the invariance of displacement constraint \eqref{r1}, i.e.,  a network with ratio-of-distance matrix $D_r$ \eqref{ratiom} has the same displacement constraint as the network with distance matrix $D$ \eqref{distance}. Hence, parameters $\mu_{ij}, \mu_{ik}, \mu_{ih}, \mu_{il}$ in  \eqref{r1} can be 
obtained by the ratio-of-distance matrix $D_r$ \eqref{ratiom} through \text{Algorithm \ref{array-sum}}.  Next, we will introduce how to obtain the ratio-of-distance matrix $D_r$ \eqref{ratiom} based on the six triangles $\bigtriangleup_{ijl}(p)$, $\bigtriangleup_{ikl}(p)$, $\bigtriangleup_{ihl}(p)$, $\bigtriangleup_{jkl}(p)$, $\bigtriangleup_{jhl}(p)$, $\bigtriangleup_{khl}(p)$ shown in Fig. \ref{case}$(c)$.

We will use triangle $\bigtriangleup_{ijl}(p)$ as an example to show how to obtain the ratio-of-distances $\frac{d_{il}}{d_{ij}}, \frac{d_{jl}}{d_{ij}}$  in $D_r$. In the {case} $(\mathbf{2.2})$,
for the triangle $\bigtriangleup_{ijl}(p)$ with nodes $p_i, p_j, p_l$, there are two categories: (a) 
the 
nodes $p_i, p_j, p_l$ have different types of measurements, (b) 
at least two of the nodes  $p_i, p_j, p_l$ have the same type of measurement.

\begin{enumerate}[(a)]

\item The nodes $p_i, p_j, p_l$ have different types of measurements. Since $i,j,l \in \mathcal{B} \cup\mathcal{A} \cup \mathcal{D}  \cup\mathcal{R}o\mathcal{D}$, 
the following four cases: (a1)
$i,j,l \in \mathcal{B} \cup \mathcal{A} \cup\mathcal{D}$, (a2) 
$i,j,l \in \mathcal{B} \cup \mathcal{A} \cup \mathcal{R}o\mathcal{D}$, (a3) $i,j,l \in \mathcal{B} \cup \mathcal{D} \cup \mathcal{R}o\mathcal{D}$, (a4) $i,j,l \in \mathcal{A} \cup \mathcal{D} \cup \mathcal{R}o\mathcal{D}$, cover all possibilities. Without loss of generality, for the cases (a1) and (a2), suppose node $i$ can measure local relative bearings $g_{ij}^i, g_{il}^i$
and node $j$ can measure angle $\theta_{ijl}$. Then, we have $\theta_{jil}= {g_{ij}^i}^Tg_{il}^i$  and  $\theta_{ilj}= \pi - \theta_{jil} -  \theta_{ijl}$. The ratio-of-distances can be obtained by $\frac{d_{il}}{d_{ij}}=\frac{\sin \theta_{ijl}}{\sin \theta_{ilj}}$ and $\frac{d_{jl}}{d_{ij}}=\frac{\sin \theta_{jil}}{\sin \theta_{ilj}}$. For the cases (a3) and (a4), suppose node $i$ can measure distances $d_{ij}, d_{il}$, and node $j$ can measure the ratio of distance $\frac{d_{jl}}{d_{ij}}$. The ratio-of-distances $\frac{d_{il}}{d_{ij}}, \frac{d_{jl}}{d_{ij}}$ can also be obtained. 
    
\item At least two of the nodes  $p_i, p_j, p_l$ have the same type of measurement. Since $i,j,l \in \mathcal{B} \cup\mathcal{A} \cup \mathcal{D}  \cup\mathcal{R}o\mathcal{D}$,  
the following four cases: (b1) at least two nodes can measure local relative bearing, (b2) at least two nodes can measure angle, (b3) at least two nodes can measure distance, (b4) at least two nodes can measure ratio-of-distance, cover all possibilities. Without loss of generality, for the case (b1), suppose the nodes $i,j$ can measure local relative bearings $g_{ij}^i, g_{il}^i$ and $g_{ji}^j, g_{jl}^j$. Then, we have $\theta_{jil}= {g_{ij}^i}^Tg_{il}^i$, $\theta_{ijl}= {g_{ji}^j}^Tg_{jl}^j$ and $\theta_{ilj}= \pi -  \theta_{jil}- \theta_{ijl}$. The ratio-of-distances can be obtained by $\frac{d_{il}}{d_{ij}}=\frac{\sin \theta_{ijl}}{\sin \theta_{ilj}}$ and $\frac{d_{jl}}{d_{ij}}=\frac{\sin \theta_{jil}}{\sin \theta_{ilj}}$. For the case (b2), suppose the nodes $i,j$ can measure angles $\theta_{jil}$ and $\theta_{ijl}$. Then, we have
$\theta_{ilj}= \pi -  \theta_{jil}- \theta_{ijl}$. The ratio-of-distances can also be obtained by $\frac{d_{il}}{d_{ij}}=\frac{\sin \theta_{ijl}}{\sin \theta_{ilj}}$ and $\frac{d_{jl}}{d_{ij}}=\frac{\sin \theta_{jil}}{\sin \theta_{ilj}}$.  
For the cases (b3) and (b4) where at least two nodes can measure distance or ratio-of-distance, it is clear that the ratio-of-distances $\frac{d_{il}}{d_{ij}}, \frac{d_{jl}}{d_{ij}}$ can be obtained.
\end{enumerate}

Hence,  the ratio-of-distances $\frac{d_{il}}{d_{ij}}, \frac{d_{jl}}{d_{ij}}$ in $D_r$ can be obtained by the triangle $\bigtriangleup_{ijl}(p)$.
The rest ratio-of-distances in $D_r$ can also be obtained by the triangles $\bigtriangleup_{ikl}(p)$, $\bigtriangleup_{ihl}(p)$, $\bigtriangleup_{jkl}(p)$, $\bigtriangleup_{jhl}(p)$, $\bigtriangleup_{khl}(p)$, i.e., the ratio-of-distance matrix $D_r$ is available. Then, the displacement constraint \eqref{r1} can be
obtained by the ratio-of-distance matrix $D_r$ \eqref{ratiom} through \text{Algorithm \ref{array-sum}}. The implementation of constructing the displacement constraint is given in Fig. \ref{chat}.

\begin{figure}[t]
\centering
\includegraphics[width=1\linewidth]{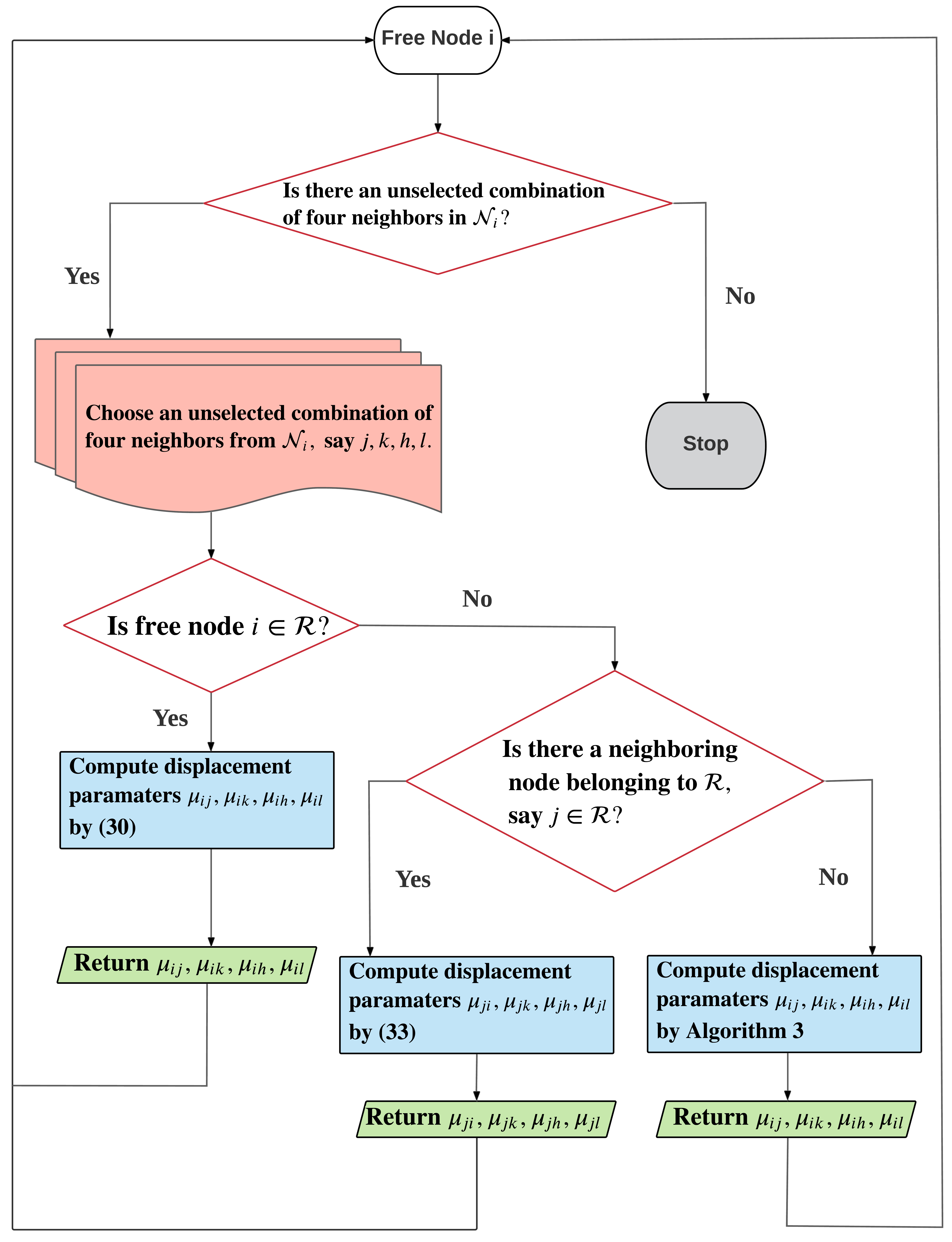}
\caption{Construction of the displacement constraint. }
\label{chat}
\end{figure}

After obtaining the angle constraints and displacement constraints, we are now ready to formulate the network localization problem and explore the network localizability.

\subsection{Network Localizability}

The angle constraints are used for describing the distribution of the anchors, and the displacement constraints are used to estimate the free nodes by the known anchor nodes. The network localization problem is formulated as
\begin{equation}\label{cost}
\begin{array}{ll}
& \hspace*{-0.5cm}    \min\limits_{\hat p \in \mathbb{R}^{3n}} J(\hat p) \!=\! \|L_{\mathcal{X}_{\mathcal{G}}}(\hat p)\|^2, \\
& \hspace*{-0.5cm} \text{subject} \  \text{to} \ \ \hat{p}_i =p_i, \ \ i \in \mathcal{V}_a,
\end{array}
\end{equation}
where $\hat p = (\hat p_1^T, \cdots, \hat p_n^T)^T$ are the estimates of $p= ( p_1^T, \cdots,  p_n^T)^T$, and $L_{\mathcal{X}_{\mathcal{G}}}(\cdot)$ is  displacement function \eqref{func}.  If there is a unique solution to \eqref{cost}, the 
network ($\mathcal{G}, p$) is called localizable.

\begin{remark}
$L_{\mathcal{X}_{\mathcal{G}}}(\cdot)$ is the displacement function consisting of available displacement constraints.  Since the true positions $p$ of all the nodes satisfy $L_{\mathcal{X}_{\mathcal{G}}}( p)=0$, we have $\| L_{\mathcal{X}_{\mathcal{G}}}(\hat p)\|_2 = 0 $ if $\hat p = p$. Hence, the network localization problem is equivalent to minimizing the cost function $\min\limits_{\hat p \in \mathbb{R}^{3n}} J(\hat p) \!=\! \|L_{\mathcal{X}_{\mathcal{G}}}(\hat p)\|^2$ given the known anchor positions.
\end{remark}

From \eqref{anre}, we obtain
\begin{equation}\label{anre1}
4\| B_{\Upsilon_{\mathcal{G}}}(\hat p) \|^2 \!+\! J(\hat p)= \hat p^TR(\hat p)^TR(\hat p)\hat p.  
\end{equation}

For the angle function $B_{\Upsilon_{\mathcal{G}}}(\cdot)$ \eqref{funa} constructed among the known anchors,
it yields $B_{\Upsilon_{\mathcal{G}}}(p) = B_{\Upsilon_{\mathcal{G}}}([\begin{array}{c}
    p_{a}  \\
    \mathbf{0}  
    \end{array}])=B_{\Upsilon_{\mathcal{G}}}(\hat p)=0$, i.e., $\|B_{\Upsilon_{\mathcal{G}}}(\hat p) \|^2=0$. Then, \eqref{anre1} becomes
    \begin{equation}\label{rt1}
     J(\hat p)= \hat p^TR(\hat p)^TR(\hat p)\hat p. 
    \end{equation}

From \eqref{infid} and \eqref{infi},  we can know that the matrix $R(p)$ are determined by the anchor positions $p_a$ and parameters (such as $w_{ik}, w_{ki}$ in \eqref{funa} and $\mu_{ij}, \mu_{uk}, \mu_{ih}, \mu_{il}$ in \eqref{func}) in the angle constraints and displacement constraints. Hence, 
we have $R(p) = R([\begin{array}{c}
    p_{a}  \\
    \mathbf{0}  
    \end{array}]) = R(\hat p)$. Then, \eqref{rt1} becomes
 \begin{equation}\label{rt2}
     J(\hat p)= \hat p^TR( p)^TR(p)\hat p. 
\end{equation}    

Define
information matrix $M \in \mathbb{R}^{3{n} \times 3{n}}$ of a network with mixed local relative measurements as
\begin{equation}\label{info}
M = R(p)^T R(p),
\end{equation}
which can be partitioned as 
\begin{equation}\label{block}
    M = \left[\begin{array}{cc}
    M_{aa} & M_{af} \\
    M_{fa}  &  M_{ff}  
    \end{array}\right],
\end{equation}
where $M_{aa} \in \mathbb{R}^{3{n_a} \times 3{n_a}}$, $M_{fa}^T = M_{af} \in \mathbb{R}^{3{n_a} \times 3{n_f}}$, and $M_{ff} \in \mathbb{R}^{3{n_f} \times 3{n_f}}$. Note that $\hat p =(p_a^T, \hat p_f^T)^T$. 
Based on \eqref{rt2} and \eqref{block}, \eqref{cost} can be rewritten as 
\begin{equation}\label{mini}
\min\limits_{\hat p_f \in \mathbb{R}^{3n_f}} \tilde J(\hat p_f) =  \hat{p}_f^T M_{ff} \hat{p}_f+  2{p}_a^T M_{af} \hat{p}_f+ {p}_a^T M_{aa} {p}_a. 
\end{equation}

Note that any minimizer $\hat p_f^*$ in \eqref{mini} satisfies
\begin{equation}\label{mini1}
\bigtriangledown_{{\hat p_f}^*} \tilde J({\hat p_f}^*)=M_{ff} {{\hat p_f}^*} + M_{fa}p_a= \mathbf{0}.  \end{equation}

From \eqref{mini1}, we can obtain the algebraic and graph conditions for network localizability shown in the following Theorem \ref{the3} and Theorem \ref{t3}, respectively.

\begin{theorem}\label{the3}
{Suppose Assumption \ref{ass3} holds.} A  network ($\mathcal{G}, p$) with mixed types of measurements in $\mathbb{R}^3$ is localizable if and only if the matrix $M_{ff}$ is nonsingular. 
\end{theorem}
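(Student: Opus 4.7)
The plan is to exploit the fact that $M = R(p)^T R(p) \succeq 0$, which makes $M_{ff}$ positive semidefinite as a principal submatrix of a PSD matrix, and makes $\tilde J(\hat p_f)$ a convex quadratic function of $\hat p_f$. First I would observe that by construction the true configuration satisfies $L_{\mathcal{X}_{\mathcal{G}}}(p) = 0$, so $J(p) = 0$. Since $J$ is a sum of squares it is nonnegative everywhere; hence $p = (p_a^T, p_f^T)^T$ is a global minimizer of the constrained problem \eqref{cost}, and in particular $p_f$ satisfies the first-order condition \eqref{mini1}, namely
\begin{equation}\label{pfopt}
M_{ff}\,p_f + M_{fa}\,p_a = \mathbf{0}.
\end{equation}

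Next I would use convexity of $\tilde J$: a vector $\hat p_f^{*}$ is a global minimizer of $\tilde J$ if and only if it satisfies $M_{ff}\hat p_f^{*} + M_{fa}p_a = \mathbf{0}$. Subtracting \eqref{pfopt} from the optimality condition shows that every minimizer obeys $M_{ff}(\hat p_f^{*} - p_f) = \mathbf{0}$, i.e., $\hat p_f^{*} - p_f \in \text{Null}(M_{ff})$. Therefore the set of minimizers is exactly the affine subspace $p_f + \text{Null}(M_{ff})$, and uniqueness of the solution to \eqref{cost} is equivalent to $\text{Null}(M_{ff}) = \{\mathbf{0}\}$.

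For the sufficiency direction, if $M_{ff}$ is nonsingular then $\text{Null}(M_{ff}) = \{\mathbf{0}\}$, so $\hat p_f^{*} = p_f$ is the unique minimizer and the network is localizable. For the necessity direction, I argue by contrapositive: if $M_{ff}$ is singular, choose any nonzero $v \in \text{Null}(M_{ff})$; then $p_f + v \neq p_f$ still satisfies the first-order condition and hence is also a global minimizer of $\tilde J$, so the solution to \eqref{cost} is not unique and the network fails to be localizable.

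The argument has no significant obstacle; the only subtlety is to justify that the first-order condition is not just necessary but also sufficient for global optimality, which follows from convexity of $\tilde J$ (equivalently, from $M_{ff} \succeq 0$), and to justify that $p_f$ itself is a minimizer, which follows from $J(p)=0$ together with $J \geq 0$. The theorem then reduces to the standard fact that minimizers of a convex quadratic form are unique precisely when its Hessian has trivial null space.
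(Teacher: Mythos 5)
Your proof is correct and follows essentially the same route as the paper's: both reduce localizability to the linear stationarity condition \eqref{mini1} and both use the fact that the true configuration satisfies $M_{ff}p_f + M_{fa}p_a = \mathbf{0}$ (the paper derives this from $p \in \text{Null}(R(p))$, you from $J(p)=0$ together with $J \ge 0$ --- the same fact in two guises). Your write-up is somewhat more complete in the direction ``localizable $\Rightarrow$ $M_{ff}$ nonsingular,'' where the paper offers only a one-line appeal to \eqref{mini1}; your explicit convexity argument identifying the minimizer set with the affine subspace $p_f + \text{Null}(M_{ff})$ supplies the justification the paper leaves implicit.
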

\begin{proof}
(Sufficiency)  From \eqref{mini1}, we can know that the matrix $M_{ff}$ must be nonsingular. (Necessity) If the matrix $M_{ff}$ is nonsingular, we have $\hat p_f^*=- M_{ff}^{-1}M_{fa}{p}_a$. Since $p \in \text{Null}(R(p))$ and $M = R(p)^T R(p)$, we have $p \in \text{Null}(M)$, that is, $M_{ff} p_f + M_{fa}p_a=\mathbf{0}$. Since $M_{ff}$ is nonsingular, we have $p_f=- M_{ff}^{-1}M_{fa}{p}_a$. Hence, $\hat p_f^*$ equals the true positions $p_f$. 
\end{proof}

\begin{figure}[t]
\centering
\includegraphics[width=1\linewidth]{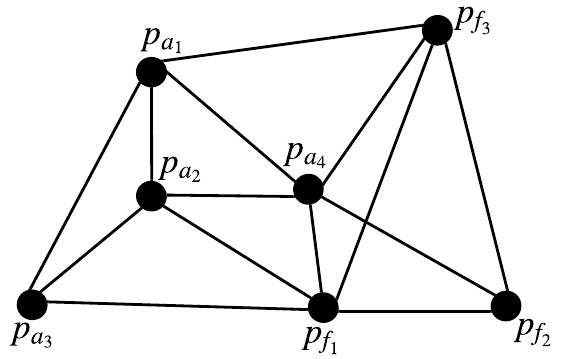}
\caption{3-D localizable network. The anchor nodes are $p_{a_1},p_{a_2},p_{a_3},p_{a_4}$ and the free nodes are $p_{f_1},p_{f_2},p_{f_3}$. The black nodes mean that they can measure only one of the five types of local relative measurements.}
\label{t1}
\end{figure}

From Theorem \ref{the3}, we have $p_f=- M_{ff}^{-1}M_{fa}{p}_a$
if a network ($\mathcal{G}, p$) with mixed types of measurements is localizable. A simple 3-D localizable network consisting of four anchor nodes $p_{a_1}=(0,0,20)^T$, $p_{a_2}=(0,0,0)^T$, $p_{a_3}=(10,-10,0)^T$, $p_{a_4}=(0,20,0)^T$ and three free nodes $p_{f_1}=(10,20,0)^T$, $p_{f_2}=(10,40,0)^T$, $p_{f_3}=(\frac{5}{2},30,30)^T$ is given in Fig. \ref{t1}. 
Based on the displacement constraints $e_{{a_2}{f_1}}-\frac{2}{3}e_{{a_3}{f_1}}-e_{{a_4}{f_1}}=\mathbf{0}$, $e_{{a_3}{f_2}}-\frac{5}{2}e_{{f_1}{f_2}}=\mathbf{0}$, and $-\frac{3}{4}e_{{a_1}{f_3}}+\frac{3}{8}e_{{a_4}{f_3}}+\frac{7}{8}e_{{f_1}{f_3}}-e_{{f_2}{f_3}}=\mathbf{0}$, we can obtain the corresponding matrices $M_{ff}$ and $M_{fa}$ shown as
\begin{equation}\label{nume}
\begin{array}{ll}
     & M_{ff}\!=\! \left[ \!
	\begin{array}{lll}
	\frac{1413}{311} & -\frac{61}{24} & \frac{7}{16} \\
	-\frac{61}{24} & 2 & -\frac{1}{2} \\
	\frac{7}{16} & -\frac{1}{2} & \frac{1}{4} 
	\end{array}
	\right] \otimes I_3, \\
	\\
    &  M_{fa}\!=\! \left[ \!
	\begin{array}{llll}
	-\frac{21}{32} & \frac{3}{2} & -\frac{19}{9} & -\frac{75}{64} \\
	\frac{3}{4} & 0 & \frac{2}{3} & -\frac{3}{8} \\
	-\frac{3}{8} & 0 & 0 & \frac{3}{16}
	\end{array}
	\right] \otimes I_4.
\end{array}
\end{equation}

It can be verified that $p_f= -M_{ff}^{-1}M_{fa}{p}_a$, where $p_a=(p_{a_1}^T,p_{a_2}^T,p_{a_3}^T,p_{a_4}^T)^T$ and $p_f=(p_{f_1}^T,p_{f_2}^T,p_{f_3}^T)^T$. Hence, the network in Fig. \ref{t1} is localizable.

\begin{theorem}\label{t3}
{Suppose Assumption \ref{ass3} holds.} 
A 3-D network ($\mathcal{G}, p$) with mixed types of measurements is localizable if the following conditions hold:
\begin{enumerate}
    \item ($\mathcal{G}, p$) is infinitesimally angle-displacement rigid;
    \item There are at least three non-colinear anchor nodes.
\end{enumerate}
\end{theorem}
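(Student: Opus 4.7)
The plan is to invoke \textit{Theorem \ref{the3}} and show that the two hypotheses force $M_{ff}$ to be nonsingular. I would argue by contradiction: suppose $M_{ff}$ is singular, so there exists a nonzero $x_f \in \mathbb{R}^{3n_f}$ with $M_{ff} x_f = \mathbf{0}$. Form $x = (\mathbf{0}^T, x_f^T)^T \in \mathbb{R}^{3n}$, whose anchor block is zero. Using the block partition \eqref{block}, one obtains $x^T M x = x_f^T M_{ff} x_f = 0$, and since $M = R(p)^T R(p)$ from \eqref{info}, this forces $R(p) x = \mathbf{0}$. Hence $x$ is an angle-displacement infinitesimal motion at $p$.

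Next, condition $(1)$ together with \textit{Definition \ref{d3}} says every such motion is trivial, so by \textit{Lemma \ref{inva}} we may write $x = (\mathbf{1}_n \otimes I_3) t + (I_n \otimes A) p + \alpha p$ for some $t \in \mathbb{R}^3$, some skew-symmetric $A \in \mathbb{R}^{3\times 3}$, and some $\alpha \in \mathbb{R}$. Reading this node by node gives $x_i = t + (A + \alpha I_3) p_i$. Because $x_i = \mathbf{0}$ for every anchor $i \in \mathcal{V}_a$, we obtain $t + (A + \alpha I_3) p_i = \mathbf{0}$ at all anchor nodes.

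Now I exploit condition $(2)$: take three non-colinear anchors $p_1, p_2, p_3$. Pairwise subtraction yields $(A + \alpha I_3)(p_2 - p_1) = \mathbf{0}$ and $(A + \alpha I_3)(p_3 - p_1) = \mathbf{0}$, so $\mathrm{Null}(A + \alpha I_3)$ contains two linearly independent vectors. For any $v$ with $(A + \alpha I_3)v = \mathbf{0}$, the skew-symmetry of $A$ gives $v^T A v = 0$, hence $-\alpha \|v\|^2 = 0$, so $\alpha = 0$. Then $A$ itself has a null space of dimension at least two; since a nonzero $3 \times 3$ skew-symmetric matrix has rank two, this forces $A = \mathbf{0}$, and finally $t = \mathbf{0}$ from $t + (A + \alpha I_3) p_1 = \mathbf{0}$. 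Thus $x = \mathbf{0}$, contradicting $x_f \neq \mathbf{0}$, so $M_{ff}$ is nonsingular and \textit{Theorem \ref{the3}} delivers localizability.

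The main obstacle I anticipate is the last step: cleanly separating the scaling component $\alpha p$ from the rotational component $(I_n \otimes A) p$ using only the vanishing of $x$ on the anchor block. The decisive trick is the identity $v^T A v = 0$ for skew-symmetric $A$, which eliminates $\alpha$ before one attacks $A$; without it the rotational and scaling degrees of freedom could conspire so that $(A + \alpha I_3) p_i = -t$ holds non-trivially at the anchors. Once $\alpha = 0$, the rank-two property of nonzero $3 \times 3$ skew-symmetric matrices collapses $A$, and then $t$ follows immediately.
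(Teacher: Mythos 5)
Your proof is correct and follows essentially the same route as the paper's: both reduce the claim to Theorem~\ref{the3} by showing that a null vector of $M_{ff}$ would yield a (trivial) angle-displacement infinitesimal motion vanishing on the anchor block, and then use the three non-colinear anchors to force the translation, rotation, and scaling parameters to vanish. The only difference is the final elimination step---the paper argues via the eigenvalues and rank of $sI_3+rA$ in two cases ($s\neq 0$ and $s=0$), whereas you subtract the anchor equations and invoke $v^TAv=0$ for skew-symmetric $A$, which is if anything a slightly cleaner way to decouple the scaling from the rotation.
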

\begin{proof}
From \textit{Lemma \ref{inva}}, $\text{Null}(R(p)) = \text{Span}\{ \mathbf{1}_n\otimes I_3, p, (I_n \otimes A)p, A+A^T =\mathbf{0},  A \in \mathbb{R}^{3 \times 3} \}$
if ($\mathcal{G}, p$) is
infinitesimally angle-displacement rigid. Thus,
the nonzero angle-displacement infinitesimally motion
is 
\begin{equation}
\delta p= sp + 
r(I_n \otimes  {A})p +   \mathbf{1}_n\otimes \mathbf{t},   
\end{equation}
where $s, r \in \mathbb{R}, \mathbf{t} \in \mathbb{R}^3, s^2+\|rA\|+ \|\mathbf{t}\| \neq 0$.  For the anchor nodes $p_a$, we have $\delta p_a= sp_a+ 
r(I_{n_a} \otimes  {A})p_a +   \mathbf{1}_{n_a}\otimes \mathbf{t}$. 

\begin{enumerate}[(a)]
\item First, we will prove $\delta p_a \neq \mathbf{0}$ by contradiction. 
If $\delta p_a = \mathbf{0}$, we have 
\begin{equation}\label{s1}
  (sI_3+rA)p_i=  -\mathbf{t}, \ \ i \in \mathcal{V}_a. 
\end{equation}

Since $A+A^T =\mathbf{0}$, matrix $A$ is a real skew-symmetric matrix and each eigenvalue of $A$ is either zero or pure imaginary. 
If $s \neq 0$, we have $\text{Rank}(sI_3+rA)=3$. From \eqref{s1}, we have $p_i= -(sI_3+rA)^{-1}\mathbf{t}, i \in \mathcal{V}_a$, i.e., the anchor nodes collocate, which contradicts Assumption \ref{ass3} that no two anchor nodes are collocated. If $s = 0$, we have $rA, \mathbf{t} \neq \mathbf{0}$, i.e., $\text{Rank}(rA) \ge 2$. From \eqref{s1}, we can know that the anchor nodes 
are colinear or collocated, which contradicts 
that no two anchor nodes are collocated and there are at least three non-colinear anchors. Hence, $\delta p_a \neq \mathbf{0}$.

\item Second, we will prove that the matrix $M_{ff}$ is nonsingular if $\delta p_a \neq 0$. We only need to show that $\delta p_a = 0$ if the matrix $M_{ff}$ is singular. 
If the matrix $M_{ff}$ is singular, there must be a nonzero $\delta p_f$ such that $M_{ff}\delta p_f=\mathbf{0}$. Let $\delta p = (\delta p_a^T, \delta p_f^T)^T$ with $\delta p_a=\mathbf{0}$. We have $\delta p^TM\delta p=\delta p_f^TM_{ff}\delta p_f=0$. 
\end{enumerate}

Based on (a) and (b), we can know that the matrix $M_{ff}$ must be nonsingular if  ($\mathcal{G}, p$) is infinitesimally angle-displacement rigid. Hence,  ($\mathcal{G}, p$) is localizable from Theorem \ref{the3}.
\end{proof}

The graph condition in Theorem \ref{t3} is sufficient but not necessary.  A 3-D localizable network may not be infinitesimally angle-displacement rigid. As shown in Fig. \ref{t1}, the anchor nodes $p_{a_1}, p_{a_2}, p_{a_3}, p_{a_4}$ are enough to localize the free nodes $p_{f_1}, p_{f_2}, p_{f_3}$. If we add additional anchor node $ p_{a_5}$ with edges $(p_{a_1}, p_{a_5})$ and $(p_{a_2}, p_{a_5})$ to the network, the 
network will not be infinitesimally angle-displacement rigid but is still localizable.
Next, we will answer the question of how to localize the free node in a distributed way.

\subsection{Distributed Localization Algorithm}

To achieve distributed network localization,
each free node $i \in \mathcal{V}_f$ implements the following
distributed localization algorithm, i.e.,
\begin{equation}\label{law}
\dot  {\hat{p}}_i  = \sum\limits_{(i,j,k,h,l) \in  \mathcal{X}_{\mathcal{G}}}N(i,j,k,h,l)+ \sum\limits_{(j,i,k,h,l) \in  \mathcal{X}_{\mathcal{G}}} \bar N(j,i,k,h,l), 
\end{equation}
where
\begin{equation}\label{n1}
\begin{array}{ll}
     & N(i,j,k,h,l) = (\mu_{ij}+\mu_{ik}+\mu_{ih}+\mu_{il}) \cdot  [\mu_{ij}(\hat{p}_j - \hat{p}_i)\!+\! \\
     &\mu_{ik}(\hat{p}_k - \hat{p}_i)\!+\!\mu_{ih}(\hat{p}_h - \hat{p}_i)\!+\!\mu_{il}(\hat{p}_l - \hat{p}_i)],
\end{array}
\end{equation}
and
\begin{equation}\label{n2}
    \begin{array}{ll}
     & \bar N(j,i,k,h,l) = \mu_{ji}^2(\hat{p}_j - \hat{p}_i)\!+\! \mu_{ji}\mu_{jk}(\hat{p}_j - \hat{p}_k)\!+\! \\
     & \mu_{ji}\mu_{jh}(\hat{p}_j - \hat{p}_h)\!+\!\mu_{ji}\mu_{jl}(\hat{p}_j - \hat{p}_l).
\end{array}
\end{equation}

$N(i,j,k,h,l)$ corresponds to the displacement constraint \eqref{r1} and $\bar N(j,i,k,h,l)$ corresponds to the displacement constraint \eqref{r2}. Based on \eqref{law}, we have 
\begin{equation}\label{dis}
\dot  {\hat{p}}_f = -M_{ff}{\hat{p}}_f - M_{fa}p_a.
\end{equation}

\begin{theorem}\label{ts3}
Suppose Assumption \ref{ass3} holds.
Under
the distributed localization algorithm 
\eqref{law} with arbitrary generated initial estimates, the position estimation errors of 
the free nodes asymptotically 
converge to zero if the following conditions hold:
\begin{enumerate}
    \item ($\mathcal{G}, p$) is infinitesimally angle-displacement rigid;
    \item There are at least three non-colinear anchor nodes.
\end{enumerate}
\end{theorem}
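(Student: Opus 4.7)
The plan is to recast the distributed update \eqref{law} as the linear closed-loop system \eqref{dis}, identify its unique equilibrium with the true configuration $p_f$, and then establish global exponential convergence by showing that $M_{ff}$ is symmetric positive definite. Once this is in hand, the error dynamics become a stable autonomous linear system independent of the initial estimate, which immediately yields the claim.

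First I would confirm the compact form $\dot{\hat{p}}_f = -M_{ff}\hat{p}_f - M_{fa}p_a$ already stated in \eqref{dis}. The term $N(i,j,k,h,l)$ collects the contribution from a displacement constraint centered at node $i$, whereas $\bar{N}(j,i,k,h,l)$ collects the contribution from a displacement constraint centered at a neighbor $j$ of $i$; comparing with the row structure \eqref{infid} of $R(p)$ shows that together they assemble exactly the $i$-th block of $-R(p)^T R(p)\,\hat{p} = -M\hat{p}$, whose free-block projection is $-M_{ff}\hat{p}_f - M_{fa}p_a$.

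Next, under the two hypotheses of the theorem, Theorem \ref{t3} guarantees that $(\mathcal{G},p)$ is localizable, so by Theorem \ref{the3} the matrix $M_{ff}$ is nonsingular. Because $M = R(p)^T R(p)$ is symmetric positive semi-definite and $M_{ff}$ is one of its principal submatrices, $M_{ff}$ is itself symmetric positive semi-definite, and therefore symmetric positive definite in view of nonsingularity. Moreover, by Lemma \ref{inva} the configuration $p$ lies in the null space of $R(p)$, so $Mp=\mathbf{0}$, which yields $M_{ff}p_f + M_{fa}p_a = \mathbf{0}$. The unique equilibrium $\hat{p}_f^{*} = -M_{ff}^{-1}M_{fa}p_a$ of \eqref{dis} therefore coincides with the true positions $p_f$.

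To close the argument I would set $e_f = \hat{p}_f - p_f$ and subtract $M_{ff}p_f + M_{fa}p_a = \mathbf{0}$ from \eqref{dis} to obtain the autonomous linear error dynamics $\dot{e}_f = -M_{ff}e_f$. Taking the Lyapunov candidate $V(e_f) = \tfrac{1}{2}\,e_f^T e_f$ gives $\dot V = -e_f^T M_{ff} e_f$, which is strictly negative for every $e_f \neq \mathbf{0}$ by positive definiteness of $M_{ff}$, producing global exponential convergence of $e_f$ to zero from any initial estimate. I expect the only mildly delicate bookkeeping step to be the first one, namely verifying that the two sums $N$ and $\bar N$ together reproduce the block row $-M_{ff}\hat{p}_f - M_{fa}p_a$ with the correct coefficients and without double-counting constraints in which $i$ appears both as a center node and as a neighbor; the subsequent positive-definiteness reduction and Lyapunov analysis are then entirely routine.
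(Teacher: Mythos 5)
Your proposal is correct and follows essentially the same route as the paper: both establish that $M_{ff}$ is positive definite by combining its positive semi-definiteness (from $M=R(p)^TR(p)$) with the nonsingularity guaranteed by Theorems \ref{the3} and \ref{t3}, use $Mp=\mathbf{0}$ to identify the equilibrium with $p_f$, and conclude via the Lyapunov function $V=\tfrac{1}{2}\|\hat p_f-p_f\|^2$ with $\dot V=-(\hat p_f-p_f)^TM_{ff}(\hat p_f-p_f)$. Your extra care in verifying that $N$ and $\bar N$ assemble the block form \eqref{dis} is a reasonable addition but does not change the argument.
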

\begin{proof}
From \eqref{info}, 
we can conclude that the matrix $M_{ff}$ is positive semi-definite.  Since ($\mathcal{G}, p$) is 
infinitesimally angle-displacement rigid, the matrix $M_{ff}$ must be positive definite from Theorem \ref{t3}.
Consider the Lyapunov function $V = \frac{1}{2}\| {\hat{p}}_f - {{p}}_f \|^2$, it yields
\begin{equation}
\begin{array}{ll}
\dot V &= (\hat p_f - p_f)^T \dot  {\hat{p}}_f \\
& = (\hat p_f - p_f)^T(-M_{ff}{\hat{p}}_f - M_{fa}p_a) \\
& = (\hat p_f - p_f)^T(-M_{ff}{\hat{p}}_f +M_{ff}p_f) \\
& = -(\hat p_f - p_f)^TM_{ff}(\hat p_f - p_f) <0, \ \text{if} \ \hat p_f \neq p_f.
\end{array}
\end{equation}
Hence, given arbitrary initial estimates $\hat p_i(0), i \in \mathcal{V}_f$,  $\hat p_i(t)$ converges to $p_i$ asymptotically. 
\end{proof}

\begin{algorithm}
\caption{The sequential distributed localization algorithm}
\label{sequ}
\begin{algorithmic}[1]
\State  \textbf{Initialization:}
Each node starts sending and receiving local relative  measurements to or from its neighboring nodes. Each node has two modes: localized and unlocalized. The anchor nodes are set as localized mode, and the free nodes are set as unlocalized mode with initial estimates. Denote $\mathcal{N}_{il}$  as the set of localized neighbors of node $i$, and the $4$-combinations from the set $\mathcal{N}_{il}$ is denoted by $C_i$;
\State For any node $i$: 
\State  \ \  \ \   \textbf{While} node $i$ is in localized mode, \textbf{do}
\State  \ \  \ \ \ \ Sending its position and local relative  measurements 
\State  \ \  \ \ \ \ to its neighboring nodes;
\State  \ \  \ \    
\textbf{End}
\State  \ \  \ \    \textbf{While}  node $i$ is in unlocalized mode, \textbf{do}
\State  \ \  \ \ \ \ Updating the set $\mathcal{N}_{il}$;
\State  \ \  \ \ \ \   \textbf{If} $|\mathcal{N}_{il}| \ge 4$, for each $4$-combination of $C_i$ such as 
\State  \ \  \ \ \ \ localized neighbors $j,k,h,l$, \textbf{do}
\State  \ \  \ \ \ \  \ \  Based on the mixed noisy measurements and the 
\State  \ \  \ \ \ \  \ \  method shown in Section \ref{conc}, 
constructing the
\State  \ \  \ \ \ \  \ \   displacement constraint among  $i,j,k,h,l$;
\State  \ \  \ \ \ \  \ \   Obtaining displacement parameters $\mu_{ij}, \mu_{ik}, \mu_{ih}, $
\State  \ \  \ \ \ \  \ \    $\mu_{il}$, or $\mu_{ji}, \mu_{jk}, \mu_{jh}, \mu_{jl}$;
\State  \ \  \ \ \ \  \ \   \textbf{If} $\mu_{ij} \!+\! \mu_{ik}\!+\!\mu_{ih}\!+\! \mu_{il} \neq 0$ or $\mu_{ji} \neq 0$, \textbf{then}
\State  \ \  \ \ \ \  \ \ \ \  $\dot  {\hat{p}}_i \!=\! N(i,j,k,h,l)$ \eqref{n1} or
$ \bar N(j,i,k,h,l)$ \eqref{n2}; 
\State  \ \  \ \ \ \  \ \  \ \  Obtaining the estimate ${\hat p}^*_i$ of node $i$; 
\State  \ \  \ \ \ \  \ \  \ \  Switching to localized mode;
\State  \ \  \ \ \ \  \ \  \ \   \textbf{Break}
\State  \ \  \ \ \ \  \ \   \textbf{End}
\State  \ \  \ \ \ \  \textbf{End}
\State  \ \  \ \ \textbf{End}
\end{algorithmic}
\end{algorithm}

If there is measurement noise,
the measurement noise will influence the parameters in the displacement constraints such as $\mu_{ij}, \mu_{ik}, \mu_{ih}, \mu_{il}$ in \eqref{func}. Note that the matrices $M_{ff}$ and $M_{fa}$ in \eqref{dis} are only determined by the parameters in the displacement constraints. Hence, the measurement noise will influence matrices $M_{ff}$ and $M_{fa}$. The noisy $\bar M_{ff}$ and $\bar M_{fa}$ are
\begin{equation}\label{noi}
    \bar M_{ff} = M_{ff} + \Delta M_{ff}, \ \ \bar M_{fa} = M_{fa} + \Delta M_{fa},
\end{equation}
where $\Delta M_{ff}, \Delta M_{fa}$ are error matrices. Then, we have
\begin{equation}\label{e1}
    \bar M_{ff} \hat p^*_f + \bar M_{fa} p_a = \mathbf{0},
\end{equation}
where $\hat p^*_f$ is the positions of the free nodes satisfying the noisy measurements and known anchor positions.  If the noisy matrix $\bar M_{ff}$ is nonsingular,  the solution $\hat p_f^*$ to \eqref{e1} is unique.

\begin{lemma}
A  network ($\mathcal{G}, p$) with mixed types of noisy measurements in $\mathbb{R}^3$ is localizable if the  following conditions hold:
\begin{enumerate}
    \item ($\mathcal{G}, p$) is infinitesimally angle-displacement rigid;
    \item The error matrix $\Delta  M_{ff}$  satisfies
\begin{equation}\label{perror}
\| \Delta  M_{ff} \| <  \lambda_{\min}(M_{ff}).
\end{equation}
\end{enumerate} 
\end{lemma}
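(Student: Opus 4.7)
The plan is to reduce localizability in the noisy setting to the nonsingularity of $\bar M_{ff}$, as already observed in the paragraph preceding the lemma: if $\bar M_{ff}$ is invertible, then \eqref{e1} has the unique solution $\hat p_f^\ast = -\bar M_{ff}^{-1} \bar M_{fa} p_a$, which is by definition what it means for the noisy network to be localizable. So the whole task becomes showing that the hypotheses force $\bar M_{ff}$ to be nonsingular.

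First, I would invoke Theorem~\ref{t3} together with Assumption~\ref{ass3} to conclude that the noise-free matrix $M_{ff}$ is nonsingular. Since $M = R(p)^T R(p)$ is positive semidefinite, the principal submatrix $M_{ff}$ is also positive semidefinite, and nonsingularity upgrades this to positive definiteness, giving $\lambda_{\min}(M_{ff}) > 0$ and in particular $\|M_{ff}^{-1}\| = 1/\lambda_{\min}(M_{ff})$. This is the only place the rigidity hypothesis is used, and it is what makes the perturbation bound \eqref{perror} meaningful.

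Second, I would apply a standard Neumann-series perturbation argument. Writing
\begin{equation*}
\bar M_{ff} \;=\; M_{ff} + \Delta M_{ff} \;=\; M_{ff}\bigl(I + M_{ff}^{-1}\Delta M_{ff}\bigr),
\end{equation*}
the hypothesis $\|\Delta M_{ff}\| < \lambda_{\min}(M_{ff})$ yields
\begin{equation*}
\|M_{ff}^{-1}\Delta M_{ff}\| \;\le\; \|M_{ff}^{-1}\|\,\|\Delta M_{ff}\| \;=\; \frac{\|\Delta M_{ff}\|}{\lambda_{\min}(M_{ff})} \;<\; 1.
\end{equation*}
Hence $I + M_{ff}^{-1}\Delta M_{ff}$ is invertible (its Neumann series converges), so $\bar M_{ff}$ is invertible as a product of invertible factors. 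Combined with the opening observation, this gives a unique $\hat p_f^\ast$ satisfying \eqref{e1}, completing the proof.

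The argument is essentially routine, so I do not expect a serious obstacle; the one mild subtlety is that $\Delta M_{ff}$ need not be symmetric in general, which rules out a direct appeal to Weyl's eigenvalue inequality on symmetric matrices. The Neumann-series route above sidesteps this entirely and uses only the operator-norm bound \eqref{perror} together with the positive definiteness of $M_{ff}$ supplied by Theorem~\ref{t3}. If one further knows that $\Delta M_{ff}$ is symmetric (which in fact holds when the noisy parameters are plugged into an expression of the form $\bar R(p)^T\bar R(p)$), the same conclusion follows even more transparently from $\lambda_{\min}(\bar M_{ff}) \ge \lambda_{\min}(M_{ff}) - \|\Delta M_{ff}\| > 0$.
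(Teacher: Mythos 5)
Your proposal is correct and follows essentially the same route as the paper: both establish nonsingularity of $M_{ff}$ from the rigidity hypothesis (via Theorem~\ref{t3} and Theorem~\ref{the3}), factor $\bar M_{ff} = M_{ff}\bigl(I + M_{ff}^{-1}\Delta M_{ff}\bigr)$, and use the bound $\|M_{ff}^{-1}\Delta M_{ff}\| \le \|\Delta M_{ff}\|/\lambda_{\min}(M_{ff}) < 1$ to conclude invertibility and hence a unique solution $\hat p_f^\ast = -\bar M_{ff}^{-1}\bar M_{fa} p_a$. Your explicit Neumann-series justification and the remark on avoiding Weyl's inequality only make precise a step the paper leaves implicit.
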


\begin{proof}
From Theorem \ref{the3}, $M_{ff}$ is nonsingular if ($\mathcal{G}, p$) is infinitesimally angle-displacement rigid. Then, we have
$\bar{M}_{ff} = M_{ff} (I \!+\! M_{ff}^{-1}\Delta  M_{ff})$. 
Since
$\| M_{ff}^{-1} \Delta  M_{ff}   \| \le \| M_{ff}^{-1} \| \|  \Delta M_{ff} \|= \frac{\| \Delta M_{ff} \|}{\lambda_{\min}(M_{ff})}$,
the matrix $\bar M_{ff}$ must be nonsingular if the condition in \eqref{perror} holds.  From \eqref{info}, 
we can conclude that $\bar M_{ff}$ must also be positive definite. Then, under the proposed distributed localization algorithm $\dot  {\hat{p}}_f = - \bar M_{ff}{\hat{p}}_f - \bar M_{fa}p_a$, the estimate ${\hat{p}}_f$ will converge to ${\hat{p}^*}_f$ shown as
\begin{equation}\label{uu1}
 \hat p_f^*= - \bar M_{ff}^{-1}\bar M_{fa}p_a.
\end{equation}
\end{proof}

\begin{remark}
The condition in \eqref{perror} is only used to  guarantee the noisy matrix $\bar M_{ff}$ to be nonsingular, which is sufficient but not necessary. Hence, the noisy matrix $\bar M_{ff}$ may still be nonsingular even if the condition in (53) does not hold, that is, the network may still be localizable under the proposed scheme even if the condition in (53) does not hold.
\end{remark}

From \textit{Lemma 2}, we can know that the displacement constraints are invariant to the translation of the entire network. If  
we change the original point of the global coordinate frame or translate the whole network, 
the matrices $\bar{M}_{fa}, \bar{M}_{ff}, M_{ff},$ $M_{fa}$ will remain unchanged, i.e.,  
\begin{align}\label{c11}
& M_{ff}(p_f- \mathbf{1}_{n_f} \otimes  p_o)+ M_{fa}(p_a-  \mathbf{1}_{n_a} \otimes p_o)= \mathbf{0}, \\
\label{c1}
& \bar M_{ff} (\hat p^*_f - \mathbf{1}_{n_f} \otimes  p_o) + \bar M_{fa}(p_a  - \mathbf{1}_{n_a} \otimes  p_o)= \mathbf{0}, 
\end{align}
where $p_o \in \mathbb{R}^3$ is the original point of any local coordinate frame. If the condition in \eqref{perror} holds, based on \eqref{uu1} and \eqref{c1}, we have 
\begin{equation}\label{c3}
\begin{array}{ll}
      & \|\hat{p}_f^* - p_f \| \\
      &=\|  - \bar{M}_{ff}^{-1}\bar{M}_{fa}p_a -p_f\| \\
     & = \|  - \bar{M}_{ff}^{-1}\bar{M}_{fa}(p_a-  \mathbf{1}_{n_a} \otimes p_o) -(p_f - \mathbf{1}_{n_f} \otimes  p_o)\|. 
\end{array}
\end{equation}

From \eqref{c3}, we can know that
the position estimation error remains unchanged if we change the original point of the global coordinate frame or translate the whole network. Actually, the position estimation error is only determined by the noisy measurements and the structure of the network. Then, we can obtain the error bound of position estimates of the free nodes shown as
\begin{equation}\label{c2}
\begin{array}{ll}
       & \|\hat{p}_f^* - p_f\| \\
       & = \|  - \bar{M}_{ff}^{-1}\bar{M}_{fa}(p_a-  \mathbf{1}_{n_a} \otimes p_o) -( p_f - \mathbf{1}_{n_f} \otimes  p_o)\|   \\
       & \le  \frac{\|\Delta M_{fa}\|\|p_a - \mathbf{1}_{n_a} \otimes  p_o\| + \|\Delta M_{ff}\|\|p_f- \mathbf{1}_{n_f} \otimes  p_o\|}{\|I+M_{ff}^{-1}\Delta M_{ff}\|\|M_{ff}\|} \\
       & \le \frac{\|\Delta M_{fa}\| \sum\limits_{i=1}^{n_a} \|p_i -p_o \| + |\Delta M_{ff}\| \sum\limits_{i=n_a \!+\!1}^{n}
       \|p_i -p_o \|}{\|I+M_{ff}^{-1}\Delta M_{ff}\|\|M_{ff}\|}, \ p_o \in \mathbb{R}^3. 
\end{array}
\end{equation}

From \eqref{c2}, we can know that the error bound $u$ of position estimates can be obtained by minimizing the following cost function. 
\begin{equation}\label{bistan}
    u=\min\limits_{p_o \in \mathbb{R}^3} \frac{\|\Delta M_{fa}\| \sum\limits_{i=1}^{n_a} \|p_i -p_o \| + |\Delta M_{ff}\| \sum\limits_{i=n_a \!+\!1}^{n}
       \|p_i -p_o \|}{\|I+M_{ff}^{-1}\Delta M_{ff}\|\|M_{ff}\|}.
\end{equation}

The cost function \eqref{bistan} is a Fermat-Weber location problem, which can be solved by using the method in  \cite{trinh2015fermat}. Hence,
the error bound $u$ of position estimates will also
remain unchanged even if we change the original point of the global coordinate frame or translate the whole network.

The proposed distributed localization algorithm is implemented in a simultaneous way when the  noisy matrix $\bar M_{ff}$ is nonsingular.
When the  noisy matrix $\bar M_{ff}$ is not nonsingular, the free nodes can still be estimated by using a sequential distributed localization protocol shown in the above Algorithm \ref{sequ}, where there are two kinds of localization algorithms, i.e., $\dot  {\hat{p}}_i \! = \! N(i,j,k,h,l)$ \eqref{n1} and $\dot  {\hat{p}}_i \! = \! \bar N(j,i,k,h,l)$ \eqref{n2}.

If $\mu_{ij}\!+\!\mu_{ik}\!+\! \mu_{ih}\!+\! \mu_{il} \neq 0$, 
for the localization algorithm $\dot  {\hat{p}}_i \! = \! N(i,j,k,h,l)$ \eqref{n1},
consider a Lyapunov candidate 
$\bar V = \frac{1}{2}\| {\hat{p}}_i - \frac{\mu_{ij}\hat{p}^*_j+\mu_{ik}\hat{p}^*_k+\mu_{ih}\hat{p}^*_h+\mu_{il}\hat{p}^*_l}{\mu_{ij}\!+\!\mu_{ik}\!+\! \mu_{ih}\!+\! \mu_{il}}\|^2$
, it yields
\begin{equation}
\begin{array}{ll}
       {\dot {\bar V}} \hspace*{-0.3cm} & \!=\! -\! (\mu_{ij}\!+\!\mu_{ik}\!+\! \mu_{ih}\!+\! \mu_{il})^2\|{\hat{p}}_i \!-\! \frac{\mu_{ij}\hat{p}^*_j\!+\!\mu_{ik}\hat{p}^*_k\!+\!\mu_{ih}\hat{p}^*_h\!+\!\mu_{il}\hat{p}^*_l}{\mu_{ij}\!+\!\mu_{ik}\!+\! \mu_{ih}\!+\! \mu_{il}}\|^2  \\
     & \le 0.
\end{array}
\end{equation}
where $\hat{p}^*_j, \hat{p}^*_k, \hat{p}^*_h, \hat{p}^*_l$ are the  estimates of the localized neighbors $j,k,h,l$.
Hence, the estimate  of node $i$ will converge to $\hat{p}^*_i=\frac{\mu_{ij}\hat{p}^*_j+\mu_{ik}\hat{p}^*_k+\mu_{ih}\hat{p}^*_h+\mu_{il}\hat{p}^*_l}{\mu_{ij}\!+\!\mu_{ik}\!+\! \mu_{ih}\!+\! \mu_{il}}$. 
Similarly, if $\mu_{ji} \neq 0$, for the localization algorithm $\dot  {\hat{p}}_i \! = \! \bar N(j,i,k,h,l)$ \eqref{n2}, we can know that the estimate of node $i$ will converge to 
\begin{equation}
\hat{p}^*_i\!=\! \frac{\mu_{ji}\!+\!\mu_{jk}\!+\!\mu_{jh}\!+\!\mu_{jl}}{\mu_{ji}} \hat{p}^*_j\!-\! \frac{\mu_{jk}}{\mu_{ji}}\hat{p}^*_k \!-\! \frac{\mu_{jh}}{\mu_{ji}}\hat{p}^*_h\!-\! \frac{\mu_{jl}}{\mu_{ji}}\hat{p}^*_l.    
\end{equation}

In Algorithm \ref{sequ},
to implement the localization algorithm $\dot  {\hat{p}}_i=  N(i,j,k,h,l)$ or $\dot  {\hat{p}}_i= \bar N(i,j,k,h,l)$, node $i$ requires at least four localized neighbors, i.e.,  $|\mathcal{N}_{il}| \ge 4$.  Actually, when $|\mathcal{N}_{il}| = 3$ or $2$, node $i$ may also be localizable.

\begin{enumerate}[(a)]
    \item When $|\mathcal{N}_{il}| = 3$, node $i$ has three localized neighbors $j,k,h$.  The ratio-of-distance matrix among the nodes $i,j,k,h$ can be obtained by the noisy measurements. Based on the ratio-of-distance matrix, we can check whether the nodes $i,j,k,h$ are coplanar or the nodes $j,k,h$ are non-colinear by Menger determinant method shown in \eqref{cay1} and \eqref{cay2}. Note that
 a network with ratio-of-distance matrix has the same barycentric coordinate as the network with distance matrix. Hence, if
the nodes $i,j,k,h$ are coplanar and the nodes $j,k,h$ are non-colinear, based on the ratio-of-distance matrix, we can obtain the barycentric coordinate $\{ \mu_{ij}, \mu_{ik}, \mu_{ih} \}$ of node $i$ with respect to the nodes $j,k,h$ by Algorithm \ref{disa1}. Thus, the estimate $\hat p_i^*$ of node $i$ can be obtained by its localized neighbors $\hat {p}^*_{j}, \hat {p}^*_{k}, \hat {p}^*_{h}$, i.e.,
\begin{equation}
        \hat p_i^* = \mu_{ij}\hat {p}^*_{j}+\mu_{ik}\hat {p}^*_{k}+\mu_{ih}\hat {p}^*_{h}. 
\end{equation}

\item When $|\mathcal{N}_{il}| = 2$, node $i$ has two localized neighbors $j,k$. The ratio of distances $\frac{d_{ik}}{d_{ij}}$ and $\frac{d_{jk}}{d_{ij}}$ can be obtained by the noisy measurements. If $\frac{d_{ik}}{d_{ij}}+\frac{d_{jk}}{d_{ij}}=1$, or $\frac{d_{ij}}{d_{ik}}+\frac{d_{jk}}{d_{ik}}=1$, or $\frac{d_{ij}}{d_{jk}}+\frac{d_{ik}}{d_{jk}}=1$, the nodes $i,j,k$  are colinear. Without loss of generility, suppose $\frac{d_{ik}}{d_{ij}}+\frac{d_{jk}}{d_{ij}}=1$, we have
\begin{equation}\label{ds12}
    e_{ik} = \frac{d_{ik}}{d_{ij}}e_{ij}.
\end{equation}
Then, the estimate $\hat p_i^*$ of node $i$ can be obtained by its localized neighbors $\hat {p}^*_{j}, \hat {p}^*_{k}$, i.e.,
\begin{equation}
        \hat p_i^* = -\frac{d_{ik}}{d_{jk}} \hat {p}^*_{j}+\frac{d_{ij}}{d_{jk}}\hat {p}^*_{k}. 
\end{equation}
\end{enumerate}

\subsection{Discussions on the Proposed Method}

From Algorithm \ref{disa1} and \eqref{disd2}, we can know any four nodes can form a displacement constraint if they are on the same plane. Hence, 2-D network localizaion with mixed measurements only requires 
each free node has at least three neighbors rather than four neighbors required for 3-D network. Hence, when the proposed method is applied to 2-D space,
the Assumption \ref{ass3} can be relaxed as
\begin{assumption}\label{ass4}
No two nodes are collocated. Each anchor node $i$ has at least two neighboring anchor nodes $j,k$, where $i,j,k$ are mutual neighbors. Each free node $i$ has at least three neighboring nodes $j,k,h$, where $i,j,k,h$ are mutual neighbors and non-colinear.
\end{assumption}

\begin{remark}
From \eqref{wmi}, we can know that the graph can be directed if each free node can measure relative positions in its local coordinate frame. In addition, it is possible to apply the proposed method to 
a switching graph
if there exists $T$  such that for each time instant ${t_1}$, the union network $(\mathcal{G}([ t_1,  t_1+T]), p)$ is infinitesimally angle-displacement rigid, where $\mathcal{G}([ t_1,  t_1+T])= \{ \mathcal{V}, \bigcup\limits_{ t \in [ t_1,  t_1+T] } \mathcal{E}( t) \}$ is the union graph during the time interval $[ t_1,  t_1+T]$.
\end{remark}

The existing works consider distributed localization with mixed distance and bearing measurements \cite{lin2015distributed, eren2011cooperative, stacey2017role}, but \cite{lin2015distributed} is limited to 2-D space and \cite{eren2011cooperative, stacey2017role} need to know the global coordinate frame. Note that each node in \cite{lin2015distributed, eren2011cooperative, stacey2017role} can measure both distances and bearings.  The more challenging case is that 
some nodes can only measure distances or bearings. For this case, Lin develops a distributed algorithm for 2-D network localization, where each node can only measure distances, bearings, or relative positions in its local coordinate frame \cite{lin2017mix}. 
Compared with his result \cite{lin2017mix}, the advantages of our work are concluded as follows.
\begin{enumerate}[(i)]
    \item The work in \cite{lin2017mix} only considers 2-D space. The 3-D case cannot be solved by trivially extending the results in \cite{lin2017mix}, but our method can be applied in 3-D space.
    \item The work in \cite{lin2017mix} only considers three kinds of local relative measurements: local relative position, local relative bearing, and distance, but our work considers five kinds of local relative measurements. 
    \item The difference between the Assumption in  \cite{lin2017mix} and our Assumption \ref{ass4} is that the work in \cite{lin2017mix} requires the convex hull assumption, i.e., each node lies strictly inside the convex hull spanned by its neighboring nodes, but our work only requires the free node and its neighbors be non-colinear. It is clear that our Assumption \ref{ass4} is mild.
\end{enumerate}

\begin{figure}[t]
\centering
\includegraphics[width=1\linewidth]{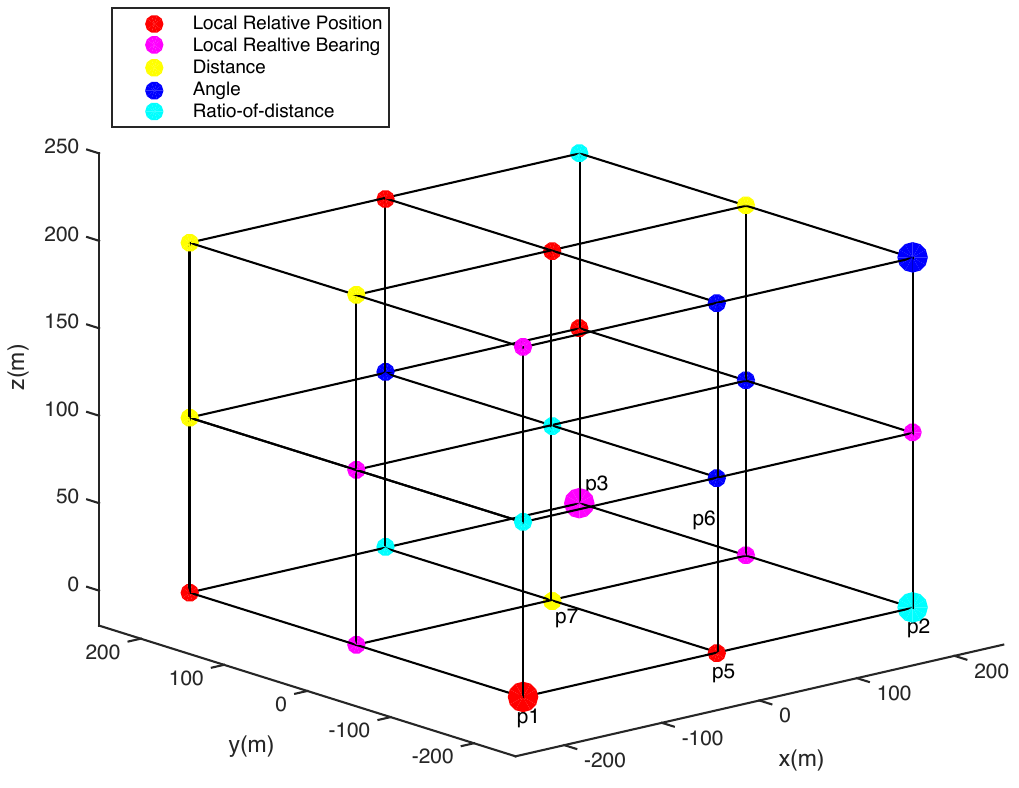}
\caption{3-D localizable network. }
\label{t1a}
\end{figure}

\section{Simulation}\label{simu}

Two numerical examples are given to illustrate the effectiveness of the theoretical findings. A three-dimensional localizable network with mixed types of measurements is shown in Fig. \ref{t1a}, where the $4$ anchor nodes $p_a = [p_1^T,\cdots ,p_{4}^T]^T$ and $23$ free nodes $p_f = [p_5^T,\cdots ,p_{27}^T]^T$
are denoted by large solid dots and small solid dots, respectively. The orientation difference between the local coordinate frame of each node and global coordinate frame is unknown.
It can be verified that the network in Fig. \ref{t1a} is infinitesimally angle-displacement rigid, where
the corresponding angle constraints and displacement constraints are constructed by our proposed methods shown in Section \ref{conc}. For example, for the
anchor nodes $p_1 \in \mathcal{R}$, $p_2 \in \mathcal{R}o\mathcal{D}$, $p_3 \in  \mathcal{B}$ and free nodes $p_5 \in \mathcal{R}$, $p_6 \in \mathcal{A}, p_7 \in \mathcal{D}$  shown in Fig. \ref{t1a},  the angle constraints \eqref{sa} for the anchor nodes are obtained by \eqref{para}.
\begin{equation}\label{sa}
 e_{13}^Te_{12}-e_{31}^Te_{32}=0, \ \ e_{21}^Te_{23}=0,
\end{equation}
where
\begin{equation}
     p_1 \!=\! [-200,-200,0]^T, p_2 \!=\! [200,-200,0]^T,  p_3 \!=\! [200,200,0]^T.
\end{equation}

For the free node $p_5$, since it can measure local relative position $e_{51}^5, e_{52}^5, e_{56}^5, e_{57}^5$ to its neighboring nodes $p_1, p_2, p_6, p_7$, the displacement constraint \eqref{sd} among the nodes $p_1, p_2, p_5, p_6, p_7$ are obtained by \eqref{wmi}.
\begin{equation}\label{sd}
    e_{51} +e_{52} =0,
\end{equation}
where
\begin{equation}
  p_5 \!=\! [0,-200,0]^T,  p_6 \!=\! [0,-200,100]^T,  p_7 \!=\! [0,0,100]^T.
\end{equation}

\begin{figure}[t]
\centering
\includegraphics[width=1\linewidth]{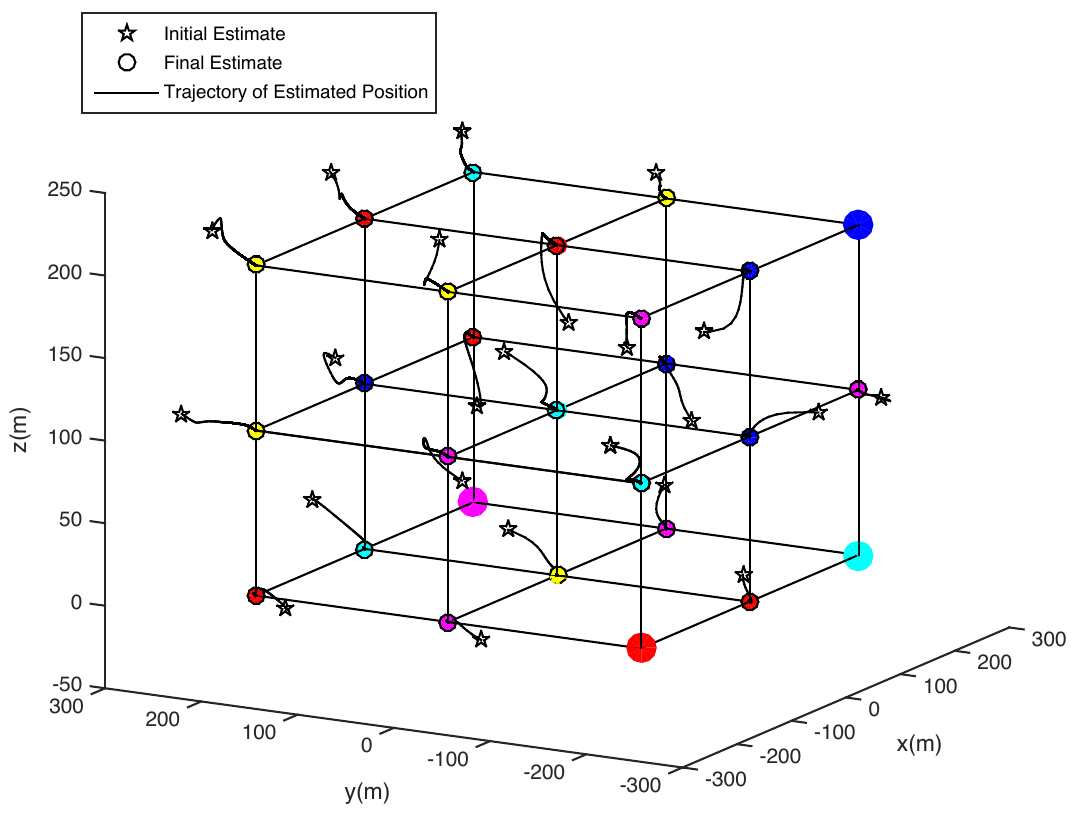}
\caption{Trajectory of estimated position without measurement noise. }
\label{case8}
\end{figure}

\begin{figure}[t]
\centering
\includegraphics[width=1\linewidth]{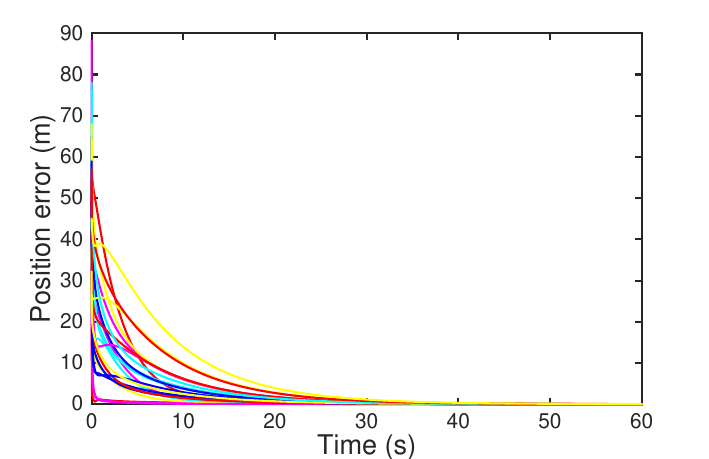}
\caption{Position estimation error without measurement noise. }
\label{case81}
\end{figure}

When there is no measurement noise, 
under
distributed localization algorithm
\eqref{law} with randomly generated initial estimate, each free node can estimate its position shown in Fig. \ref{case8}. The position estimation errors of the free nodes converge to zero globally shown in Fig. \ref{case81}.
In the proposed method, the mixed types of measurements are only used to construct the displacement constraints. Hence, the measurement noises will only influence displacement constraints.  For example, when there is measurement noise, the free node $p_5$ obtain the noisy local relative position measurements $\bar e_{51}^5, \bar e_{52}^5, \bar e_{56}^5, \bar e_{57}^5$ shown as
\begin{equation}
\begin{array}{ll}
     & \bar e_{51}^5 = e_{51}^5-[5,5,5]^T, \ \  \bar e_{52}^5 = e_{52}^5 +[6,7,8]^T, \\
     & \bar e_{56}^5= e_{56}^5-[3,4,4]^T, \ \ \bar e_{57}^5 = e_{57}^5 +[10,8,9]^T.
\end{array}
\end{equation}

Then, the displacement constraint with noisy local relative position measurements $\bar e_{51}^5, \bar e_{52}^5, \bar e_{56}^5, \bar e_{57}^5$ among the nodes $p_1, p_2, p_5, p_6, p_7$ becomes
\begin{equation}\label{sd1}
    \frac{6201}{61} e_{51}+\frac{38449}{380} e_{52}-\frac{1103}{551} e_{56} - e_{57} = 0.
\end{equation}

In the simulation, zero-mean Gaussian noises 
are added to local relative measurements. 
Based on the noisy local relative measurements, we can obtain the noisy matrices $\bar M_{ff}$ and $\bar M_{fa}$ in \eqref{noi}. The distributed algorithm \eqref{dis} becomes
\begin{equation}
\dot  {\hat{p}}_f = -\bar M_{ff}{\hat{p}}_f - \bar M_{fa}p_a.
\end{equation}

The trajectories of estimated positions and position estimation errors are given in Fig. \ref{case811} and Fig. \ref{case812}, respectively. The estimates of the free nodes can still converge to neighborhoods of the actual positions of the nodes shown if the condition \eqref{perror} holds.

\begin{figure}[t]
\centering
\includegraphics[width=1\linewidth]{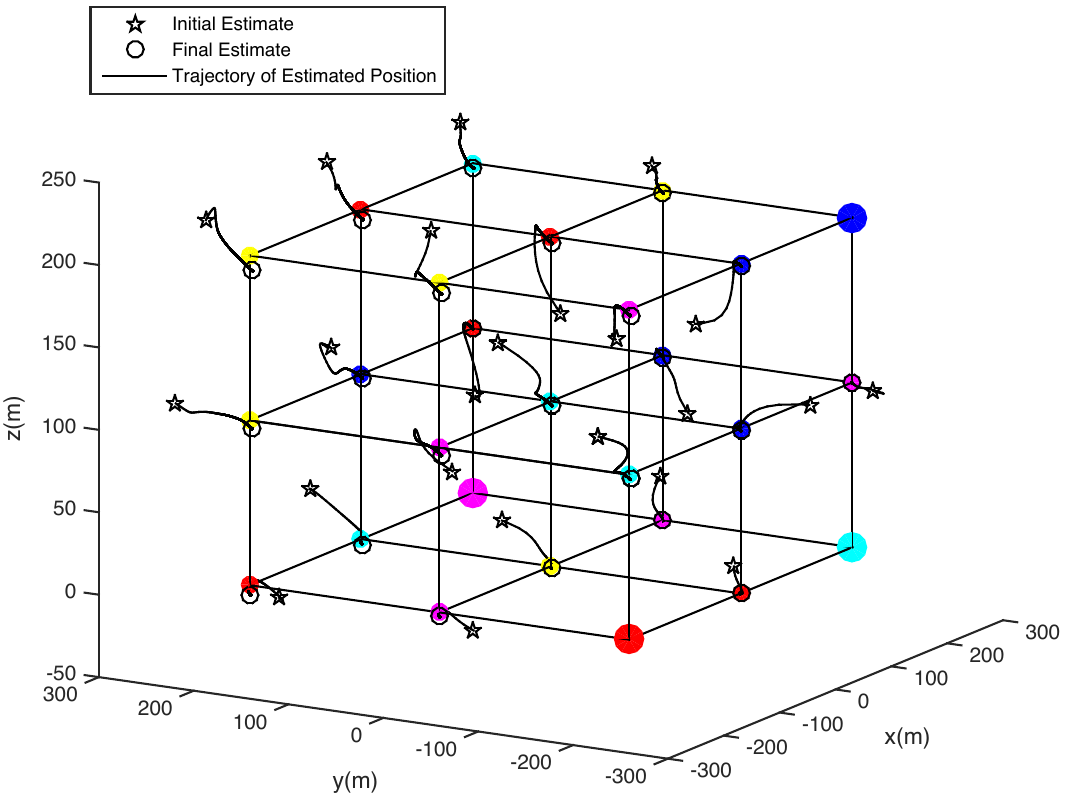}
\caption{Trajectory of estimated position with measurement noises. }
\label{case811}
\end{figure}

\begin{figure}[t]
\centering
\includegraphics[width=1\linewidth]{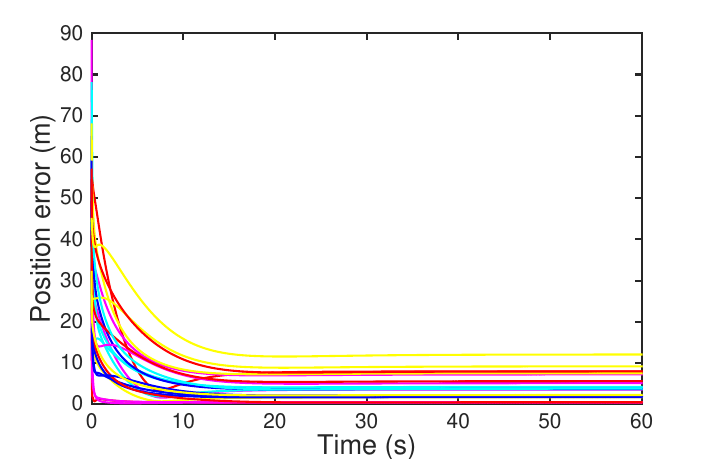}
\caption{Position estimation error with measurement noises. }
\label{case812}
\end{figure}

\section{Conclusion}\label{conc1}

This paper studies the 3-D network localization with mixed local relative measurements, where each node holds a local coordinate frame without a common orientation and has no knowledge about the global coordinate frame.
The main idea is to construct the displacement constraints for the positions of the nodes by using mixed local relative measurements. A linear distributed algorithm is proposed for each free node that can globally estimate its own position if the network is localizable. Our future work is to relax the assumption and evaluate the estimation error given the statistics of measurements noises.

\ifCLASSOPTIONcaptionsoff
  \newpage
\fi

\bibliographystyle{IEEEtran}
\bibliography{papers}

\end{document}